\theoremstyle{definition}
\newtheorem{lemma}{Lemma}
\newtheorem{definition}{Definition}
\newtheorem{theorem}{Theorem}
\newtheorem{corollary}{Corollary}
\newtheorem{proposition}{Proposition}
\newcommand{\cG}{\mathcal{G}}
\renewcommand\paragraph{\@startsection{paragraph}{4}{\z@}%
                                     {3ex\@plus 1ex \@minus -.2ex}%
                                     {-1ex \@plus .2ex}%
                                     {\normalfont\normalsize\bfseries}}
\begin{document}

\title{\Large \bf Another proof of the $1/N$ expansion of the rank three tensor model with tetrahedral interaction}

\author{{\bf Valentin Bonzom}}\email{bonzom@lipn.univ-paris13.fr}
\affiliation{LIPN, UMR CNRS 7030, Institut Galil\'ee, Universit\'e Paris 13,
99, avenue Jean-Baptiste Cl\'ement, 93430 Villetaneuse, France, EU}
\affiliation{IRIF, UMR CNRS 8243, Universit\'e de Paris, France, EU}

\date{\small\today}

\begin{abstract}
\noindent The rank three tensor model with tetrahedral interaction was shown by Carrozza and Tanasa to admit a $1/N$ expansion, dominated by melonic diagrams, and double tadpoles decorated with melons at next-to-leading order. This model has generated a renewed interest in tensor models because it has the same large $N$ limit as the SYK model. In contrast with matrix models, there is no method which would be able to prove the existence of $1/N$ expansions in arbitrary tensor models. The method used by Carrozza and Tanasa proves the existence of the $1/N$ expansion using two-dimensional topology, before identifying the leading order and next-to-leading graphs. However, another method was required for complex, rank three tensor models with planar interactions, which is based on flips. The latter are moves which cut two propagators of Feynman graphs and reglue them differently. They allow transforming graphs while tracking their orders in the $1/N$ expansion. Here we use this method to re-prove the results of Carrozza and Tanasa, thereby proving the existence of the $1/N$ expansion, the melonic dominance at leading order and the melon-decorated double tadpoles at next-to-leading order, all in one go.

\end{abstract}

\medskip

\keywords{}

\maketitle

\section*{Introduction}

\paragraph{The $1/N$ expansions of tensor models --} Tensor models are natural generalizations of matrix models. It was however necessary to wait up until 2010 to find a $1/N$ expansion in tensor models \cite{1/NExpansion}. For matrix models \cite{MatrixReview}, the $1/N$ expansion can be derived by observing that the Feynman graphs are discrete surfaces which are classified by their genera. In tensor models however, the combinatorics of the Feynman graphs is much more intricate. The relationship to discrete manifolds also is more subtle \cite{GurauBook}. Not every tensor models has Feynman graphs corresponding to discretized (pseudo-)manifolds, and even when it has, it is not as helful as in matrix models since those manifolds are of higher dimensions and are thus not as easily classified as surfaces.

As we discuss in this introduction and in contrast this matrix models, there are various methods in the literature which can be used to prove the existence of $1/N$ expansions for tensor models but none of them works for all models, which makes the situation not fully satisfactory.

Since the fields in tensor models have more indices than matrices, more interactions and observables can be constructed. The set of interactions and observables in fact depends on the symmetries of the model. A useful representation is to represent each $T$ of the observable or interaction as a vertex and draw an edge between two vertices when there is an index identification between the corresponding tensors. Such a drawing is called a \emph{bubble}.

The first $1/N$ expansion was found for complex tensors having no particular symmetry among their indices. In that case \cite{Uncoloring}, the natural invariance of the observables and interactions is $U(N)^d$, for complex tensors of rank $d$. The multiorientable model \cite{MO, MO-Review} in rank three has as invariance $U(N)\times O(N)\times U(N)$ and a quartic interaction and came shortly after. Carrozza and Tanasa \cite{CarrozzaTanasa} then extended it to real tensors with a tetrahedral interaction invariant under $O(N)^3$,
\begin{equation} \label{K4Bubble}
K_4(T) = \sum_{\substack{a_1, a_2, a_3\\ b_1, b_2, b_3}} T_{a_1 a_2 a_3} T_{a_1 b_2 b_3} T_{b_1 b_2 a_3} T_{b_1 a_2 b_3} = \begin{array}{c} \includegraphics[scale=.4]{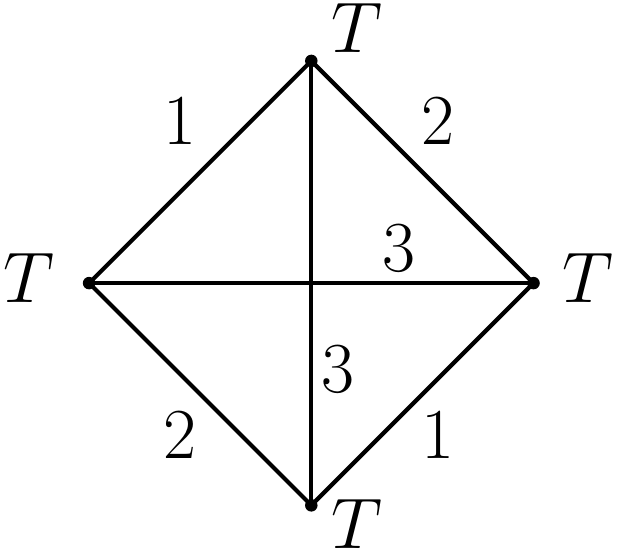} \end{array}
\end{equation}
The corresponding bubble is the graph of the tetrahedron $K_4$ and the edge labels indicate the positions of the indices which are contracted.

The free energy of this model is
\begin{equation}
F_N(t) = \ln \int dT\ \exp -N^2\sum_{a_1, a_2, a_3} T_{a_1 a_2 a_3}T_{a_1 a_2 a_3} - N^{5/2}\,t\,K_4(T)
\end{equation}
and it admits an expansion over connected Feynman graphs as follows. A Feynman graph has some bubbles, all copies of $K_4$, and propagators connecting vertices pairwise. As usual, propagators will be given the fictitious color 0 and represented as dashed edges. For instance, there are three double tadpoles with one bubble,
\begin{equation} \label{GraphsEx}
G^{(1)}_1 = \begin{array}{c} \includegraphics[scale=.4]{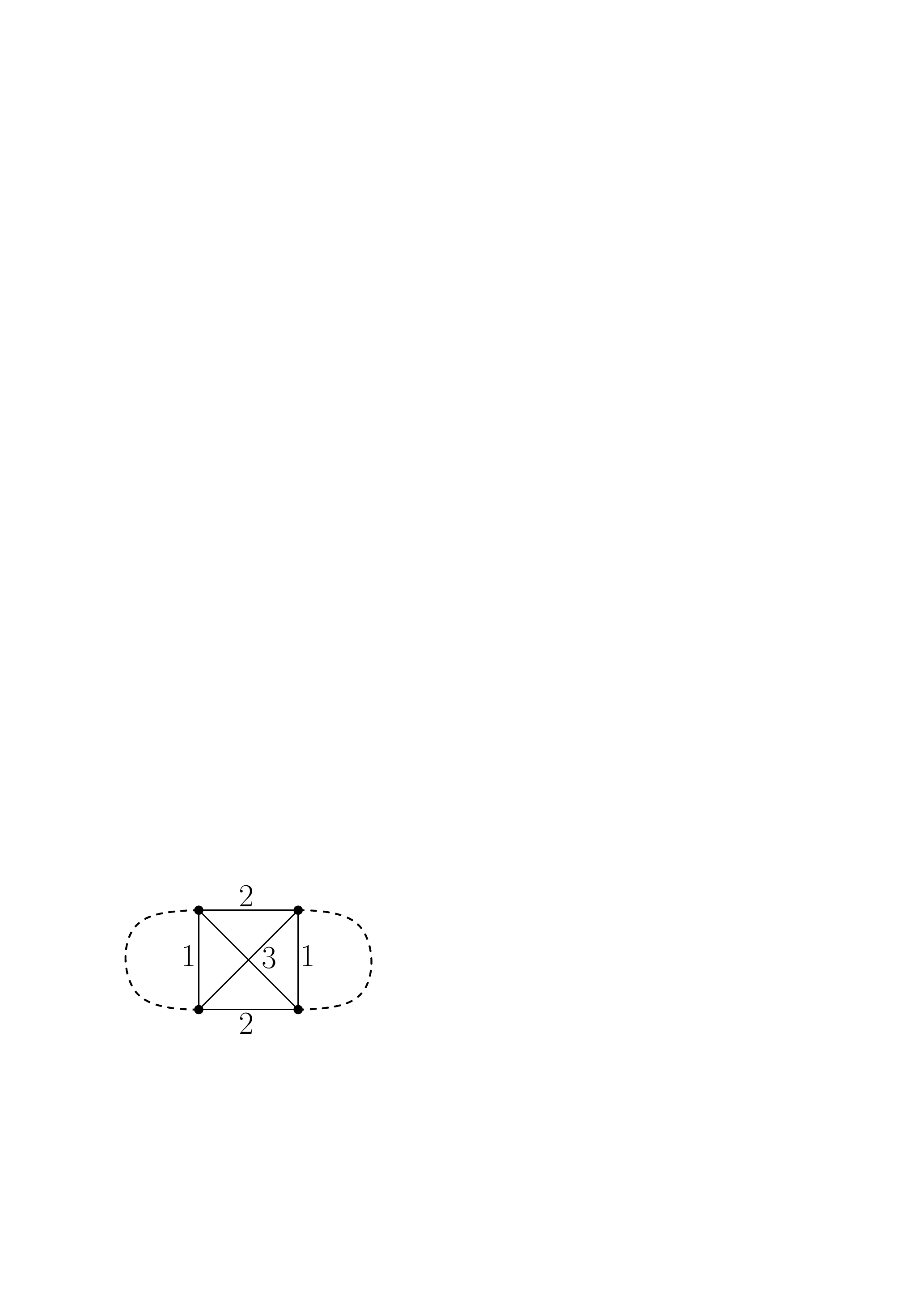} \end{array} \qquad G^{(2)}_1 = \begin{array}{c} \includegraphics[scale=.4]{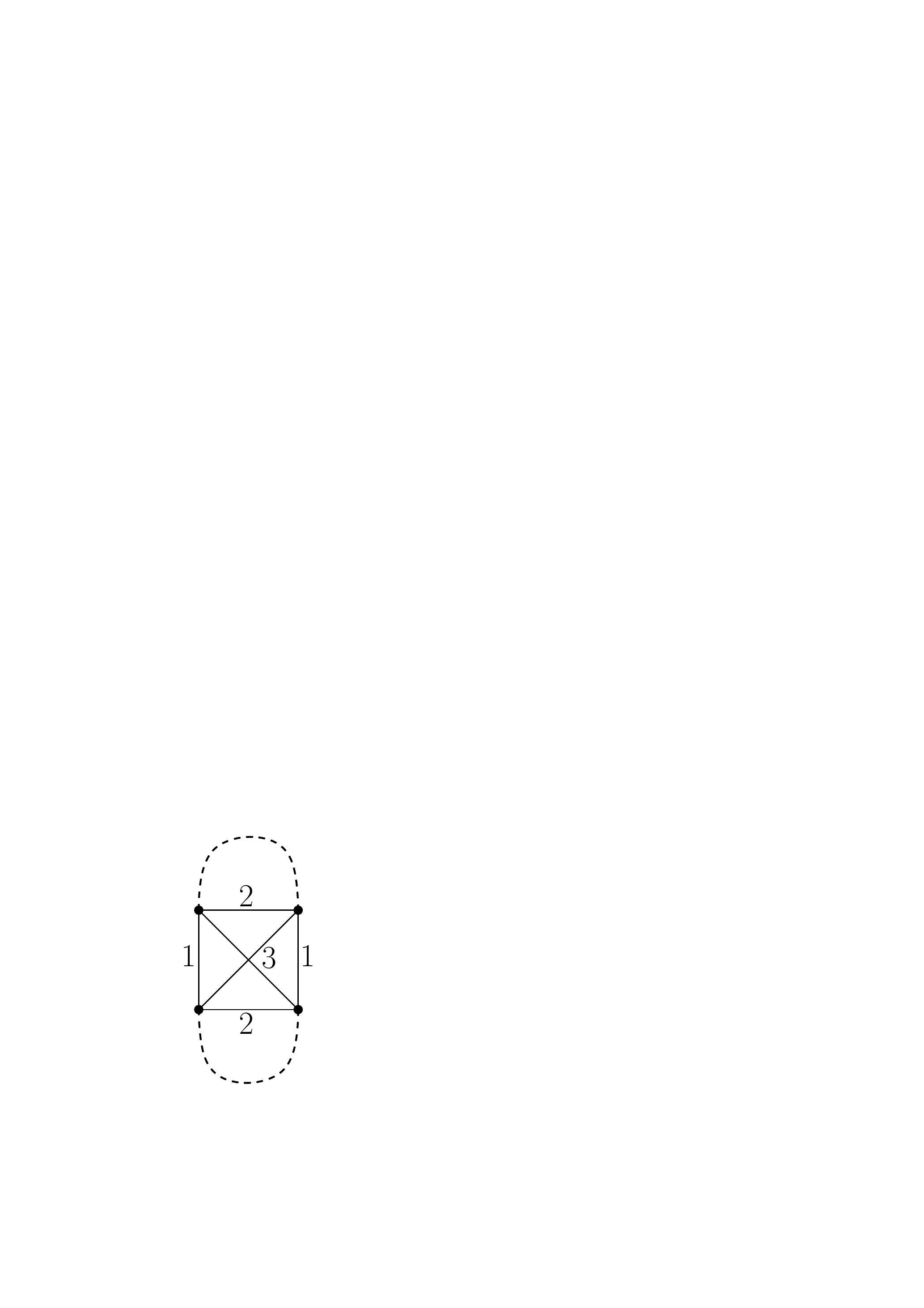} \end{array} \qquad G^{(3)}_1 = \begin{array}{c} \includegraphics[scale=.4]{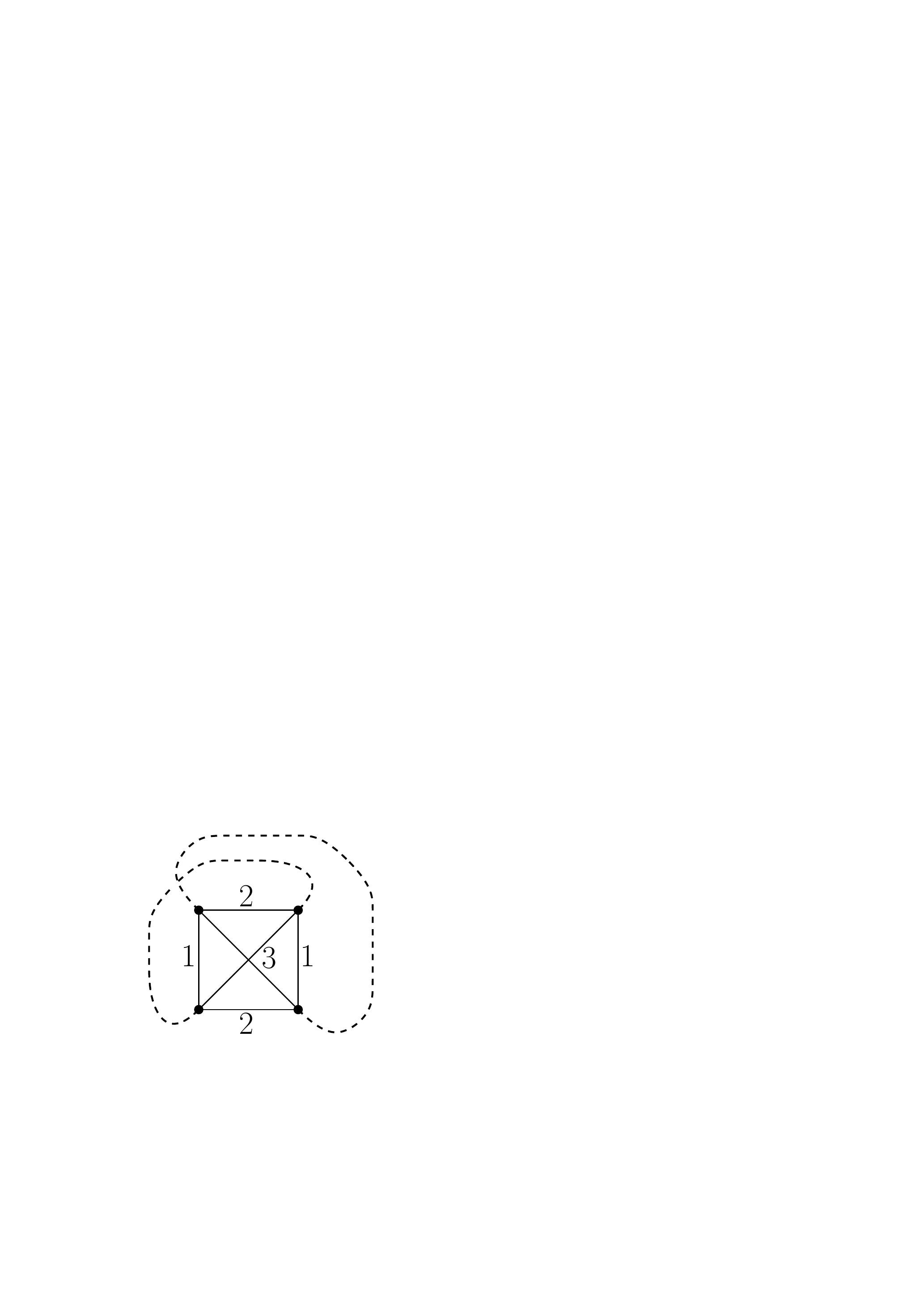} \end{array}
\end{equation}
and the leading order Feynman graph with two bubbles is
\begin{equation}
G_2 = \begin{array}{c} \includegraphics[scale=.4]{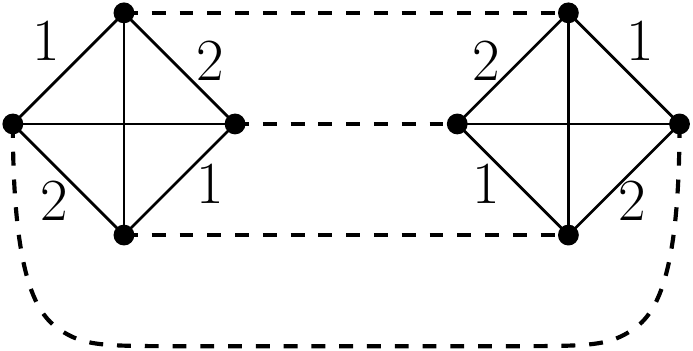} \end{array}
\end{equation}

In all cases mentioned above, the strategy was to first establish the existence of a $1/N$ expansion, i.e. writing the free energy as
\begin{equation}
F_N(t) = \sum_{\omega \geq 0} N^{d-\omega} F^{(\omega)}(t)
\end{equation}
which is done by identifying some sub-surfaces $\{J\}$, called jackets, in the Feynman graphs such that $\omega = \sum_J g_J$ where $g_J$ is the genus of $J$. This implies that $\omega \geq 0$ and the $1/N$ expansion exists. 

The second step was to identify the graphs which dominate the large $N$ limit, i.e. $\omega =0$, and which turn out the be the special class of melonic graphs\footnote{The terminology of melonic graphs might be a little confusing because they refer to graphs from several models and thus built from different bubbles. The idea is that they have the same structure. The same goes with the notion of melonic bubbles. The relevant definitions will be given in the next section.}, and then classify the graphs which appear at the subsequent orders. We give a few theorems and references.

\begin{theorem}
Tensor models with melonic interactions and $U(N)^d$ invariance admit a $1/N$ expansion dominated by melonic graphs \cite{Melons, Uncoloring}. The graphs at fixed $\omega$ are classified in \cite{Gurau-Schaeffer}. 

Moreover, the large $N$ limit is Gaussian, in the sense that the large $N$ limits of the expectations of all $U(N)^d$-invariant observables are those of the Gaussian tensor with covariance given by the large $N$ 2-point function \cite{Universality}.
\end{theorem}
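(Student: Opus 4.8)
The statement collects several classical results about melonic rank-$d$ tensor models, so the plan splits into three matching stages. \textbf{Stage 1 (existence of the expansion).} The first step is the \emph{uncoloring} reduction: each melonic interaction bubble is itself a melonic $d$-colored graph, so any Feynman graph of the model can be rewritten as a properly $(d+1)$-edge-colored graph $G$, with color $0$ carried by the propagators. To such a $G$ one attaches its $d!/2$ \emph{jackets}: the jacket associated with a cyclic permutation $\sigma$ of $\{0,1,\dots,d\}$, taken modulo inversion, is the ribbon graph whose faces are the bicolored cycles $(\sigma^k(i),\sigma^{k+1}(i))$, and it is a closed orientable surface of some genus $g_J\ge 0$. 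Counting the faces of $G$ (each contributing one factor of $N$ after the prescribed rescaling of the coupling) against vertices and edges, and inserting the Euler relation for every jacket, yields the amplitude scaling $N^{d-\omega(G)}$ with the \emph{degree} $\omega(G)=\sum_J g_J$. Since all genera are non-negative, $\omega(G)\ge 0$ and $F_N$ reorganizes as $\sum_{\omega\ge 0}N^{d-\omega}F^{(\omega)}$.

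\textbf{Stage 2 (melonic dominance and classification).} Because $\omega$ is a sum of genera, $\omega(G)=0$ is equivalent to every jacket of $G$ being planar. The easy direction --- a melonic graph, i.e. one built from the elementary melon by iterated melonic two-point insertions, has all jackets planar --- follows by induction on the number of insertions. The converse is the substantive part: by induction on the number of vertices one shows that a nontrivial $G$ with all jackets planar must contain an \emph{elementary melonic pair} (two $0$-edges forming a doubled edge with the remaining $d$ colors) whose deletion decreases the vertex count and keeps all jackets planar, so $G$ is melonic. For the Gurau--Schaeffer refinement at fixed $\omega$ one passes to \emph{schemes}: iteratively excise every melonic two-point subgraph, recording where chains of melons were reattached, until a finite \emph{core} is reached; one then proves that at fixed $\omega$ only finitely many schemes occur, each of size bounded linearly in $\omega$, and that the generating series of the graphs sharing a given scheme is a rational function of the melonic (leading-order) two-point series. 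Summing over the finitely many schemes gives the classification and an explicit form for each $F^{(\omega)}$.

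\textbf{Stage 3 (universality).} To obtain the Gaussian large $N$ limit I would write the Schwinger--Dyson equations for the expectations of $U(N)^d$-invariant bubble observables, let $N\to\infty$, and use Stages 1--2 to discard all subleading contributions; the surviving equations are exactly those of a Gaussian tensor with covariance $C$ equal to the large $N$ two-point function, because in the limit the only contributions to $\langle B\rangle$ are the melonic pairings of the tensors inside $B$, which is Wick's theorem for covariance $C$. A short induction on the size of $B$ identifies the limit with the Gaussian expectation, and the same argument yields the asymptotic factorization of multi-bubble observables. \emph{Main obstacle.} The crux is the converse direction in Stage 2 together with the finiteness and boundedness of the schemes: these are precisely the points where one must exploit the constraints of all jackets simultaneously --- the single Euler bound does not pin down the structure --- and where the argument departs irreducibly from the matrix-model derivation of the genus expansion.
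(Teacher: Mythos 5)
This statement is quoted in the paper as background with citations; the paper itself gives no proof of it, and the strategy it attributes to the cited works (a jacket-based degree bound $\omega=\sum_J g_J\geq 0$ establishing the expansion, followed by identification of the degree-zero graphs as melons, schemes for fixed $\omega$, and Gaussian universality) is exactly what your three stages reproduce. So your outline is correct and follows essentially the same route as the literature the theorem points to; the only minor deviation is that Gurau's universality theorem is proved by directly bounding the degree of graphs with higher-cumulant insertions rather than via Schwinger--Dyson equations, though your route is also viable. Note this is precisely the jacket/topology-based approach that the present paper deliberately avoids in its own flip-based proof of the $O(N)^3$ result.
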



\begin{theorem} \label{thm:O(N)Model} \cite{CarrozzaTanasa}
The rank three tensor model with tetrahedral interaction and $O(N)^3$ invariance admits a $1/N$ expansion dominated by melonic graphs. The NLO is given by the three double tadpoles decorated with melonic 2-point functions. 
\end{theorem}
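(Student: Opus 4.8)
The strategy is to reproduce Theorem~\ref{thm:O(N)Model} using the "flip" technology rather than the jacket/two-dimensional-topology argument of Carrozza and Tanasa. The starting point is the combinatorial weight: to a connected Feynman graph $G$ with $V$ tetrahedral bubbles and $p$ propagators of color $0$ one attaches an amplitude $N^{3 - \omega(G)}$ with $\omega(G)$ a positive integer to be controlled, and the goal is to show (i) $\omega(G) \geq 0$ with finitely many graphs at each fixed value, (ii) $\omega(G) = 0$ iff $G$ is melonic, and (iii) $\omega(G) = 1$ iff $G$ is a melon-decorated double tadpole of one of the three types in \eqref{GraphsEx}. The first thing I would do is fix a convenient bijective encoding of the Feynman graphs: since each $K_4$ bubble is the tetrahedral graph on four $T$-vertices with the three colors $1,2,3$ on its edges, gluing bubbles by color-$0$ propagators produces a $4$-regular properly-edge-colored graph (a "$(3+1)$-colored graph"), and $N$-power counting is governed by the number of faces, i.e. bicolored cycles. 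I would then express $\omega(G)$ directly as a face-counting defect, something like $\omega(G) = \tfrac{3}{2} + \tfrac{3}{2}V - F + (\text{boundary corrections})$, and reduce everything to a statement about how $F$ behaves under local moves.

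**The flip moves.** The heart of the argument is to introduce the flips: a flip cuts two color-$0$ propagators of $G$ and reglues the four resulting half-edges in one of the other two planar pairings, producing a graph $G'$ on the same bubbles. I would catalogue how $\omega$ changes under a flip as a function of the positions of the two cut edges relative to the faces of $G$: a flip either leaves $F$ unchanged, increases it by $2$, or decreases it by $2$, so $\omega(G') - \omega(G) \in \{-1, 0, +1\}$ up to the precise normalization, and crucially the sign is controlled by whether the two cut edges lie on a common face and in which cyclic order. The key structural lemma to isolate and prove is: \emph{any connected graph $G$ can be reduced, by a sequence of flips each of which does not increase $\omega$, either to a melonic graph (when we want to certify $\omega \geq 0$) or, if $\omega(G) = 0$, can be built up from the elementary melonic two-point insertion by $\omega$-preserving flips.} Concretely I would argue by induction on $V$: locate in $G$ a bubble that, after at most a bounded number of non-increasing flips, becomes part of an elementary melon (a two-bubble two-point subgraph with maximal face count), then excise that melon — reducing to a graph with fewer bubbles — and invoke the inductive hypothesis. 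The same induction, pushed one order further and tracking the unique flip that costs $\Delta\omega = 1$, pins down the $\omega = 1$ graphs as melon-decorated double tadpoles, since the three double tadpoles \eqref{GraphsEx} are exactly the minimal non-melonic configurations reachable by a single $\omega$-raising flip.

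**The main obstacle.** The delicate point is the reduction lemma's existence-of-a-good-flip step: one must guarantee that in any non-melonic graph there is \emph{some} pair of color-$0$ edges whose flip strictly decreases $\omega$ (so the induction terminates), and simultaneously that the flips used to \emph{create} a melon to excise never increase $\omega$. This is where a careful face-by-face analysis of the tetrahedral bubble is unavoidable: because $K_4$ is not a "melonic bubble" itself (its jackets are not all planar, unlike the pillow/necklace interactions in the earlier rank-three work), the neighbourhood of a single bubble already contributes to $\omega$, and one has to check that the local moves interact correctly with that intrinsic contribution. I expect the bookkeeping to require distinguishing a handful of cases according to how many of the three colored edges of a given bubble connect to the same neighbouring bubble, and the "hard case" will be the one where a bubble is connected to three distinct neighbours with no short face — showing that even then a non-increasing flip toward a melon exists. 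Once the reduction lemma is in place, parts (i)--(iii) follow formally, and assembling them yields $F_N(t) = \sum_{\omega \geq 0} N^{3-\omega} F^{(\omega)}(t)$ with $F^{(0)}$ the melonic sum and $F^{(1)}$ the melon-decorated-double-tadpole sum, which is the claim.
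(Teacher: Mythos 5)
Your overall strategy is indeed the paper's: control the exponent via face counting, use flips, and induct on the number of bubbles with melon excision. But as written the proposal has two genuine gaps. First, your quantitative claim about flips is wrong: it is not true that a flip changes $F$ by $0$ or $\pm 2$ (so that $\omega$ moves by at most one unit). The correct statement, which is the paper's Lemma~\ref{thm:DeltaF}, is per color: $|\Delta F_c|\leq 1$ for each $c\in\{1,2,3\}$, with the sign controlled by whether the two cut edges lie on one or two faces of color $c$; the total $\Delta F$ can then be any value in $\{-3,\dots,+3\}$, and in particular the hardest configuration in the proof has $\Delta F=-1$ after the first flip. Second, and more seriously, your ``key structural lemma'' --- that any graph can be brought to an excisable melon by a sequence of flips none of which increases $\omega$ --- is exactly the hard part, and you give no mechanism for it; in fact the paper's proof does \emph{not} proceed monotonically. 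In its delicate case (your ``bubble connected to three distinct neighbours with no short face''), the first flip strictly loses a face, the second flip recovers it only up to equality, and the contradiction for $G\in\cG_{\max}$ is then obtained by excising the newly created melonic dipole and showing that the reduced graph $\tilde G$ has \emph{no proper face of length $2$}, which is incompatible (by the induction hypothesis) with $\tilde G$ being melonic or a melon-decorated tadpole. This requires a separate lemma (the paper's Lemma~\ref{thm:Length2}: in a face-maximizing graph, a proper face of length $2$ forces $2$-point functions on both sides, i.e.\ a $2$-edge-cut) and a careful tracking of which proper faces of length $2$ are created or destroyed along $G\to G'\to G''\to\tilde G$. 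None of this bookkeeping appears in your plan, and a purely $\omega$-non-increasing reduction would not get through this case.

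You are also missing the ingredient that actually delivers the NLO statement. In the paper the leading and next-to-leading families come out of a single induction on $\cG_{\max}$ (graphs maximizing $F$ at fixed $b$), via the $2$-edge-cut case: flipping along a color-$0$ $2$-edge-cut splits $G$ into $G_L$ and $G_R$ with $F(G)=F(G_L)+F(G_R)-3$, and the parity of $b(G_L),b(G_R)$ then shows that even-$b$ maximizers are melonic ($F=\tfrac32 b+3$), odd-$b$ maximizers are melon-decorated double tadpoles ($F=\tfrac32 b+\tfrac52$), and odd--odd gluings are strictly subleading. Your proposal instead hopes to identify the NLO by ``tracking the unique flip that costs $\Delta\omega=1$,'' which is not a well-defined procedure (there is no such unique flip, and the NLO offset is $N^{-1/2}$, not $N^{-1}$, in this model's normalization). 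So while the approach is the right one in spirit, the proposal as it stands asserts rather than proves the reduction lemma, relies on an incorrect flip estimate, and lacks the proper-face-of-length-$2$ and $2$-edge-cut/parity arguments that make the induction close.
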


In fact, more is known about this model, but we have kept the theorem as it is because it is the one proved by Carrozza and Tanasa and the one we will revisit here. The NNLO and NNNLO were found in \cite{NadorCTKT}. There, it was also proved that there is a bijection between the graphs of this model which have an orientable jacket and those of the multiorientable model \cite{NadorCTKT}. The classification of the graphs of the multiorientable model at fixed $\omega$ is given in \cite{FusyTanasaMO}.

\begin{theorem} \cite{FerrariRivasseauValette}
$O(N)^d$-invariant models with maximally symmetric interactions admit $1/N$ expansions. When the corresponding bubbles are the complete graphs on $R+1$ vertices with $R$ prime, the large $N$ limit is dominated by mirror melons.
\end{theorem}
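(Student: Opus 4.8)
The plan is to reduce both assertions to a single power-counting analysis of the coloured Feynman graphs and then to isolate the leading graphs by flips, letting the primality of $R$ enter only at the last combinatorial step. For a rank-$d$ tensor with $d=R$, I would write a generic connected graph $G$ with $V$ bubbles (copies of $K_{R+1}$), $E=\tfrac{(R+1)V}{2}$ colour-$0$ propagators, and $F$ faces, a face being a bicoloured cycle alternating colour $0$ with a colour $c\in\{1,\dots,R\}$. With the coupling scaled as $N^{-\gamma(R)}$, the amplitude reads $A(G)=N^{\rho(R)-\delta(G)}\,t^{V}$, and the whole purpose of the enhancement is to tune $\gamma(R)$ so that the degree $\delta(G)=F_{\max}(V)-F(G)$ is a nonnegative integer, where $F_{\max}(V)$ is the largest number of faces attainable at fixed $V$. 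The value of $\gamma(R)$ is read off from the elementary two-bubble two-point insertion by a one-line linear computation. Once $\delta\ge 0$ is established, $F_N(t)=\sum_{\delta\ge 0}N^{\rho(R)-\delta}F^{(\delta)}(t)$, which is the first sentence of the theorem and should hold for any maximally symmetric interaction, not only complete graphs.

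To prove $\delta\ge 0$ I would first use jackets: each cyclic order of the colours defines a ribbon graph $J$ whose genus $g_J\ge 0$ bounds its face count through the Euler relation, and summing the jacket relations over a balanced family of cyclic orders expresses $\delta(G)$ as a nonnegative combination of the $g_J$. The subtlety here, and the reason the naive argument does not close, is the enhancement: the optimal scaling makes some jackets saturate, so the plain genus bound is not tight and must be supplemented by a direct control of the connectivity of the bicoloured subgraphs. I expect this refinement to be routine but essential, and it is precisely what upgrades the jacket bound to the enhanced degree.

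The leading order is then attacked with flips. Cutting two colour-$0$ propagators and regluing them in the other planar pairing shifts $\delta$ by a computable amount, and the flips that leave $\delta$ invariant generate the set of $\delta=0$ graphs from a small set of canonical representatives. Following the method of the present paper I would show that any $\delta=0$ graph can be brought, by $\delta$-nonincreasing flips, to an iterate of one elementary two-point insertion, and that this insertion must be the unique two-bubble configuration realising $F_{\max}$. The number of faces of such an insertion equals the number of cycles of a permutation assembled from the gluing pairing $\sigma\in S_R$ and the two $1$-factorizations of $K_{R+1}$, so the problem becomes the maximisation of the cycle number of a product of matchings.

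The final and hardest step is to show that the cycle-maximising $\sigma$ is the mirror involution precisely when $R$ is prime. I anticipate that the admissible maxima are organised by the divisors $m\mid R$: a divisor $m$ produces a leading insertion carrying an internal $\mathbb{Z}/m$ symmetry, with $m=R$ giving the degenerate maximal melon and $m=1$ giving the reflection, i.e. the mirror melon. When $R$ is prime these are the only two possibilities, so the mirror melon survives as the unique nontrivial leading insertion and the $\delta=0$ graphs are exactly its iterates. The main obstacle is precisely this classification: proving that no ``divisor-free'', intermediate $\sigma$ can reach $F_{\max}$. This is where the arithmetic of $R$ is indispensable, and I would expect to need either a careful orbit-counting for the product of matchings or a flip argument internal to the insertion to exclude the composite-type configurations that would otherwise survive for non-prime $R$.
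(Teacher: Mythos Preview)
This theorem is not proved in the present paper at all; it is stated in the introduction with a citation to \cite{FerrariRivasseauValette} and serves only as context for the $K_4$ result. There is therefore no proof here against which to compare your proposal.

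On the proposal itself: it is a sketch whose two halves pull in opposite directions. You establish $\delta\ge 0$ via jackets and then switch to flips for the leading order, but the whole point of the flip method in this paper is to bypass jackets and two-dimensional topology, while conversely the argument of \cite{FerrariRivasseauValette} does not use flips. More seriously, the flip step does not obviously generalise from $K_4$ to $K_{R+1}$. In rank three a flip touches exactly three face colours, and the case analysis (Lemma~\ref{thm:DeltaF}, Corollary~\ref{thm:DeltaFTotal}, and the long case split in the proof of Theorem~\ref{thm:Gmax}) closes only because three is small enough to enumerate. For general $R$ a flip affects $R$ face colours simultaneously, the total variation can range over $\{-R,\dots,R\}$, and you provide no mechanism guaranteeing that $\delta$-nonincreasing flips connect every leading graph to a canonical two-bubble insertion. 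Finally, your divisor picture for the cycle-maximising gluing $\sigma$ and the role of primality is heuristic: you correctly anticipate that the arithmetic of $R$ is the crux, but ``either a careful orbit-counting\ldots\ or a flip argument internal to the insertion'' is a statement of what remains to be done, not a proof.
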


Notice that the latter theorem includes the $K_4$ interaction, in rank three, where mirror melons reduce to melonic graphs.

The study of $U(N)^d$-invariant tensor models with non-melonic interactions started with \cite{New1/N} and \cite{MelonoPlanar} in $d\geq 4$, which found that non-trivial\footnote{Here non-trivial large $N$ limit means that an infinite number of graphs contribute at large $N$.} large $N$ limits exist, which can lead to different phases: melonic (i.e. branched polymers in the continuum), planar ribbon graphs (2D quantum gravity phase) and a phase transition between them (this is the same phase portrait as in matrix models with multi-trace interactions). The corresponding multicritical behaviors were given in \cite{LionniThurigen}.

These $U(N)^d$-invariant models with non-melonic interactions were not solved with the same method as the theorems outlined above, but instead using intermediate field methods, i.e. bijections on the Feynman graphs, and often proving that a $1/N$ expansion exists and which the dominant graphs are at large $N$ both in one go. This typically involves \emph{moves} to transform an arbitrary graph into another (which has a feature of the leading order terms) and compare their exponents of $N$.

The case of rank three, $U(N)^3$-invariant tensors with non-melonic interactions is more difficult. The same bijective methods \cite{LucaPhD} were used on a couple of examples in \cite{IF-LionniRivasseauBonzom}, with the (non-planar) complete bipartite graph $K_{3,3}$ as interaction bubble, and in \cite{Octahedra} with a (planar) cube as interaction bubble. The latest result \cite{PlanarBubbles} is thus an important extension to more generic interactions, and as it turns out extends Gurau's universality theorem for large random tensors \cite{Universality}.

\begin{theorem} \cite{PlanarBubbles}
$U(N)^3$-invariant tensor models with any planar bubbles as interactions admit $1/N$ expansions and their large $N$ limits are Gaussian.
\end{theorem}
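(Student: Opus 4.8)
The plan is to bypass the two-dimensional topology of jackets used in the theorems above and instead to control the power of $N$ of an \emph{arbitrary} Feynman graph directly, by deforming it with \emph{flips}; this way the existence of the expansion and the identification of the dominant graphs are obtained in one go. First I would fix the perturbative setup: a vacuum graph $G$ of such a model is built from a chosen set of planar interaction bubbles, with the vertices of the bubbles paired by colour-$0$ propagators, and the amplitude of $G$ is a single power of $N$ entirely determined by a combinatorial quantity of the associated edge-coloured graph (its number of bicoloured faces, equivalently a ``degree'' $\omega(G)$, normalized so that the amplitude of a melonic graph is of order $N^{3}$). The whole theorem then reduces to two statements: (i) $\omega(G)\ge 0$ for every $G$, with equality precisely on the melonic graphs built from the given bubbles; and (ii) the resulting melonic leading order is Gaussian.

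For (i) I would analyse a single flip: choose two colour-$0$ edges $e_1,e_2$ of $G$, cut them into four half-edges, and reglue the half-edges in one of the two other admissible ways. As in matrix models, this changes the number of faces --- hence $\omega$ --- by $0$ or $\pm 2$, the sign being governed by whether $e_1$ and $e_2$ run ``parallel'' or ``crossing'' along the faces they share. The heart of the argument is a \emph{monotonicity lemma}: if a connected $G$ is not melonic, then one can always find a flip that does not decrease $\omega$ while strictly decreasing an auxiliary complexity (for instance the number of colour-$0$ edges not sitting on a melonic two-point insertion), so that iterating the procedure terminates on a melonic graph, for which $\omega=0$. This is where the planarity of the bubbles enters in an essential way, and where I expect the real difficulty to lie: one must show that the strands running through an arbitrary planar bubble can always be rerouted by such a move, which I would attempt by induction on the size of the bubbles, using the dipole reductions that planarity makes available inside each bubble, in the spirit of the reduction of planar coloured graphs and of planar ribbon graphs.

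Granting the monotonicity lemma, every $G$ has $\omega(G)\ge 0$, so the free energy organizes as
\begin{equation}
F_N(t)=\sum_{\omega\ge 0}N^{3-\omega}F^{(\omega)}(t),
\end{equation}
with $F^{(0)}$ the generating function of the melonic graphs. For (ii) I would then write the Schwinger--Dyson equations for the $U(N)^3$-invariant observables: at leading order only melonic terms survive, the two-point function obeys a closed self-consistent equation, and every higher invariant factorizes at large $N$ into a sum over pairings of products of two-point functions. This is exactly the assertion that the large $N$ limit coincides with the Gaussian tensor distribution whose covariance is the self-consistent two-point function, i.e. an extension of Gurau's universality theorem. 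Everything after the monotonicity lemma is essentially bookkeeping; the genuinely model-dependent obstacle is to show that an $\omega$-non-decreasing flip always exists for interaction bubbles of arbitrary planar type.
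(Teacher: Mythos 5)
The statement you were asked to prove is not proved in the present paper at all: it is quoted from \cite{PlanarBubbles}, and the paper only records the method used there, namely local flips acting on two propagators incident to the same bubble, employed to identify the graphs that maximize the exponent of $N$ at fixed number of bubbles, so that the existence of the $1/N$ expansion and the large $N$ Gaussianity come out together. Your proposal adopts this philosophy, but it contains a genuine error and leaves the decisive step unproved. The error is the identification of the leading family: for a generic planar bubble there are no ``melonic graphs built from the given bubbles'' (melonic graphs only exist when the bubbles themselves are melonic), and the graphs maximizing the number of faces are instead those in which the vertices of each bubble are paired by dressed 2-point functions along suitable (planar) pairings, producing a tree-like gluing structure; it is this structure, not melonicity, that encodes the Gaussian limit with covariance given by the full 2-point function, in the spirit of Gurau's universality theorem \cite{Universality}. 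Relatedly, deriving Gaussianity from Schwinger--Dyson equations presupposes large-$N$ factorization, which is exactly what the combinatorial identification of the maximizing graphs is meant to supply, so your point (ii) cannot be dismissed as bookkeeping independent of point (i).

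The second gap is that your ``monotonicity lemma'' is precisely where all the difficulty sits, and the quantitative premise you feed it is false in the colored setting: a flip does not change the face count by $0$ or $\pm 2$ as in ribbon graphs; each of the three colors changes by at most one (this is Lemma~\ref{thm:DeltaF} of the present paper), the total variation can be any value in $\{-3,\dots,3\}$, and there are configurations in which the available flips strictly lose faces. Consequently a single $\omega$-non-decreasing flip need not exist locally, and the actual argument must chain flips, perform dipole (2-point) reductions, and run an induction on the number of bubbles to control the face-losing situations --- this is exactly the delicate case $c)$ in the present paper's proof for the $K_4$ model, and carrying out the analogous analysis for arbitrary planar bubbles is the substance of \cite{PlanarBubbles}. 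Asserting without proof that a suitable flip ``can always be found,'' and directing the termination argument at the wrong terminal family, leaves the theorem unestablished.
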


This theorem does not use any bijection and solely relies on \emph{local moves} which are called \emph{flips}. They consist in cutting two propagators which are incident to the same bubble (hence ``local'') and regluing them differently,
\begin{equation}
\begin{array}{c} \includegraphics[scale=.4]{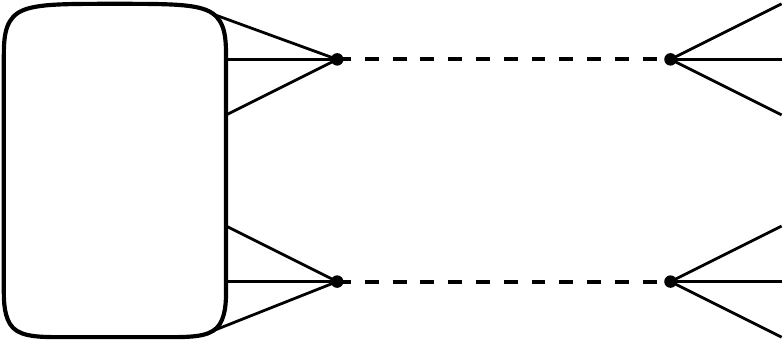} \end{array} \qquad \to \qquad\begin{array}{c} \includegraphics[scale=.4]{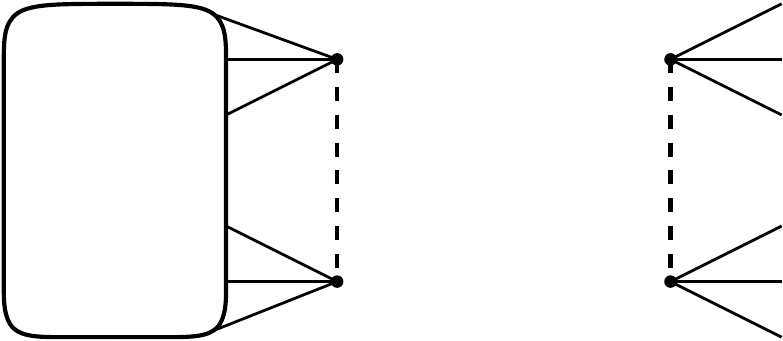} \end{array}
\end{equation}
This method proves in one go the existence of the $1/N$ expansion and the large $N$ Gaussianity, by identifying the Feynman graphs which maximize the exponents of $N$ at fixed number of interactions. 

\paragraph{Our result: a new proof of Theorem \ref{thm:O(N)Model} --} In this paper we simply revisit Theorem \ref{thm:O(N)Model} on the $K_4$ model: we prove the existence of its $1/N$ expansion, the melonic dominance and the NLO, all at once. We do so using the same flips as in \cite{PlanarBubbles}. There is nevertheless a difference: the flips will not be as local, since they will act on propagators connected to bubbles which are adjacent (instead of the same bubble). In particular, our method does not use the combinatorics of any sub-surfaces such as the jackets and their 2-dimensional topology.

\paragraph{Motivations --} Tensor models have attracted some attention lately beyond the quantum gravity, matrix models and group field theory communities, as it was noted by Witten that (colored, i.e. multi-tensor) tensorial quantum-mechanical models have the same large $N$ limit as the SYK model with the advantage of not having disorder. It was then noticed how to use single-tensor models, in particular the $K_4$ model in \cite{KlebanovTarnopolsky, FerrariLargeD}. This also revived interests in multi-matrix models \cite{Azeyanagi} with new large $N$, large $D$ behaviors and sparked the development of tensor models with less symmetry. In particular, $O(N)$-invariant models \cite{O(N)1, O(N)2} and $Sp(N)$-invariant models \cite{Sp(N)} have been shown to possess $1/N$ expansions. These late developments have however required yet other techniques, in particular because cancellations of diagrams with divergent exponents are involved.

The reader can thus see that there are no known unifying techniques which could be used to prove the existence of $1/N$ expansions in any type of tensor models. We therefore think that in spite of us revisiting a known result, the present paper shows that the method developed in \cite{PlanarBubbles} is not just a one-off and has potential applicability for other models.

As it turns out, the combinatorial proof of the melonic dominance in the SYK model \cite{MelonsSYK} also works similarly: it proves the existence of the large $N$ limit and the melonic dominance in one go and does so using only moves (no 2-dimensional topology involved). The moves are different because the graphs involve different building blocks but it seems reasonable to put this proof in the same bag as \cite{PlanarBubbles} and the present paper.

\section{The tetrahedral tensor model, its large $N$ limit and next-to-leading order}

Let $\cG$ be the set of connected Feynman graphs of the $K_4$ model, i.e. graphs obtained by connecting some $K_4$ bubbles, with colored edges as in \eqref{K4Bubble}, with edges of color 0 between vertices, such that every vertex is incident to exactly one edge of color 0. For $G\in\cG$, we denote $b(G)$ the number of bubbles of $G$. The number of vertices is then $4b(G)$ and the number of edges of color 0 is $2b(G)$.

\begin{definition}
A \emph{face} of color $c\in\{1,2,3\}$ is a cycle alternating edges of color 0 and $c$. The \emph{length} of a face is the number of edges of color 0 of the cycle.

We say that a face is \emph{proper} if it never visits the same bubble more than once. In other words, a proper face of length $n$ visits exactly $n$ distinct bubbles.
\end{definition}

The terminology of ``face'' is inherited from matrix models where the same type of cycles are the boundaries of the faces of the surfaces (the faces themselves being the connected components of the complement of the graph). For instance, the double tadpole $G_1^{(c)}$ has 2 faces of color $c$ and 1 face for the two other colors. The graph $G_2$ has two faces of each color, each being a proper face of length 2.

We will only use the notion of proper face for faces of length 2. Indeed, a face of length 2 can visit the same bubble twice, in which case the graph is necessarily a double tadpole $G_1^{(c)}$ of \eqref{GraphsEx}, or it can visit two different bubbles. We consider proper faces to get rid of the double tadpoles.

The exponent of $N$ associated to $G\in\cG$ is calculated as follows. There is a factor $1/N^2$ for each edge of color 0 (the bare propagator) and a factor $N^{5/2}$ for each bubble. Moreover, since the propagator simply identifies the indices which are in the same position, one gets a power of $N$ for each face, so that
\begin{equation}
F_N(t) = \sum_{G\in\cG} s(G)\ N^{F(G)-\frac{3}{2}b(G)}\ (-t)^{b(G)}
\end{equation}
where $F(G)$ is the total number of faces and $s(G)$ is a numerical factor.

The strategy in \cite{CarrozzaTanasa} is as follows.
\begin{itemize}
\item Proving the existence of the $1/N$ expansion, i.e. proving that the exponent of $N$ is bounded,
\begin{equation}
\exists\, d \qquad \forall\, G\in\cG \qquad F(G)-\frac{3}{2}b(G) \leq d
\end{equation}
\item Identify the graphs which reach the bound (and resum the free energy which is here trivial).
\end{itemize}

Here we will do both at once. Introduce $\cG_{\max}\subset \cG$ the set of graphs which maximize the number of faces at fixed $b(G)$, for all values of $b(G)$. Our job will be to identify the graphs of $\cG_{\max}$ and evaluate their number of faces in term of $b(G)$.

As often in tensor models, 2-point functions play a special role. 
\begin{definition}
A \emph{2-point function} (or 2-point graph) is a graph $G^\circ$ which is either just an edge of color 0, or is 
obtained from a graph $G\in \cG$ by cutting an edge of color 0 into two disconnected halves. In the latter case, we say that it is a non-trivial 2-point function. A 2-point function will be graphically represented with a disk on an edge of color 0.
\end{definition}

A graph $G\in\cG$ which contains a non-trivial 2-point function is said to be \emph{2-particle reducible} (2PR). Equivalently, it contains a \emph{2-edge-cut}, i.e. a pair of edges $\{e_L, e_R\}$ whose removal disconnects $G$ into two connected components. In particular, we say that two vertices are connected by a 2-point function if they are connected by an edge of color 0 or if they are incident to two edges of color 0 forming a 2-edge-cut.

Next we define the melonic graphs for the $K_4$ model, and its melon-decorated double tadpoles.

\begin{definition}
The \emph{melonic dipole} is the 2-point graph obtained by cutting an edge of color 0 in $G_2$,
\begin{equation}
\begin{array}{c} \includegraphics[scale=.4]{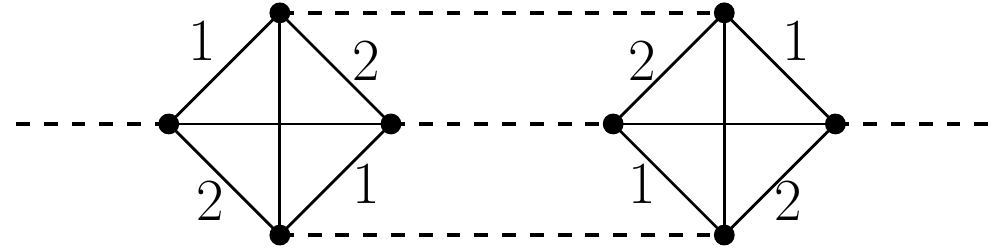} \end{array}
\end{equation}

Melonic graphs are then defined inductively. They only exist for $b(G)\in 2\mathbb{N}$. The only melonic graph with two bubbles is $G_2$. A melonic graph with $b+2$ bubbles is obtained by inserting the melonic dipole on any edge of color 0 of a melonic graph with $b$ bubbles.

The double tadpoles decorated with melons are any graph of the form 
\begin{equation}
\begin{array}{c} \includegraphics[scale=.4]{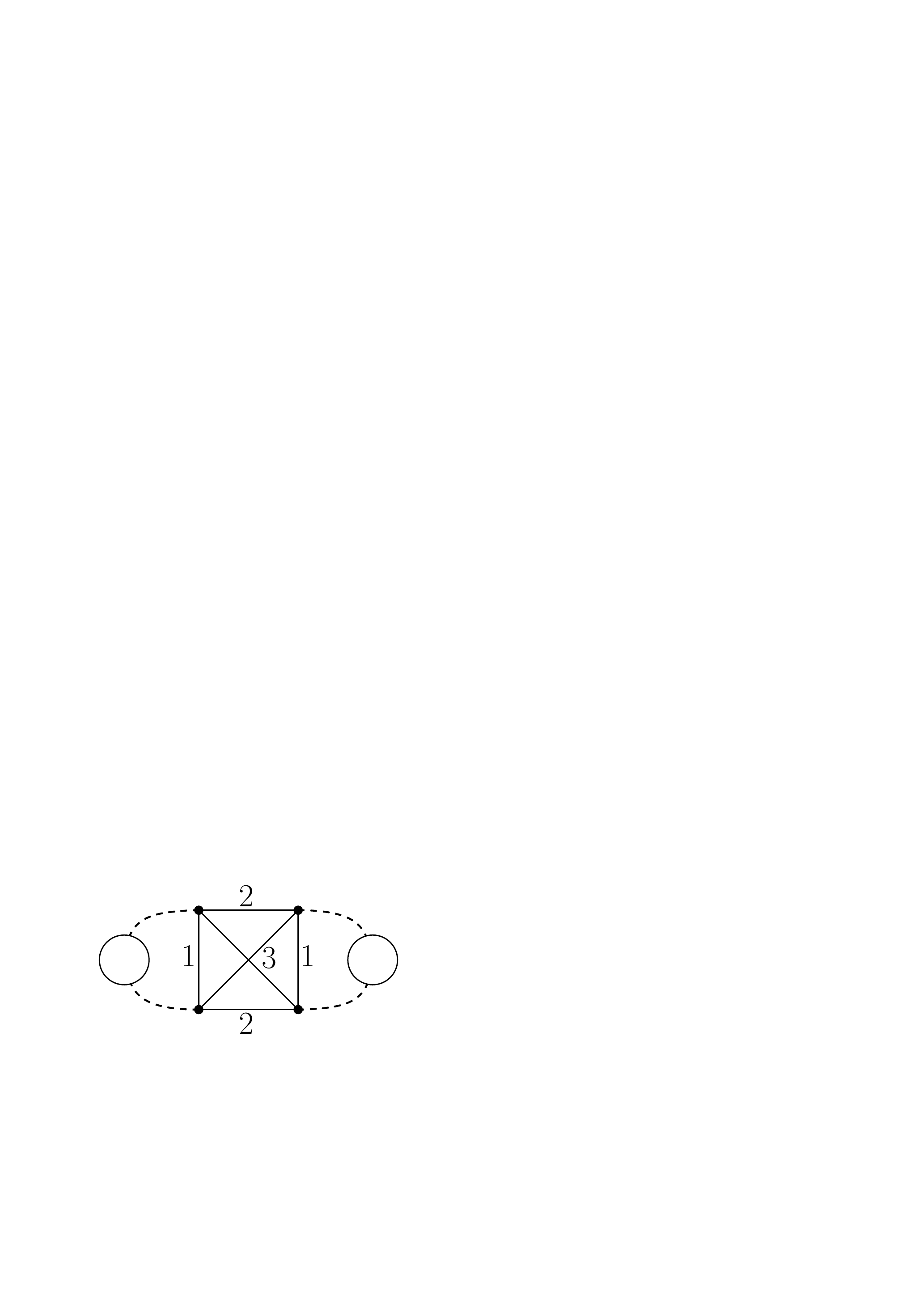} \end{array} \qquad
\begin{array}{c} \includegraphics[scale=.4]{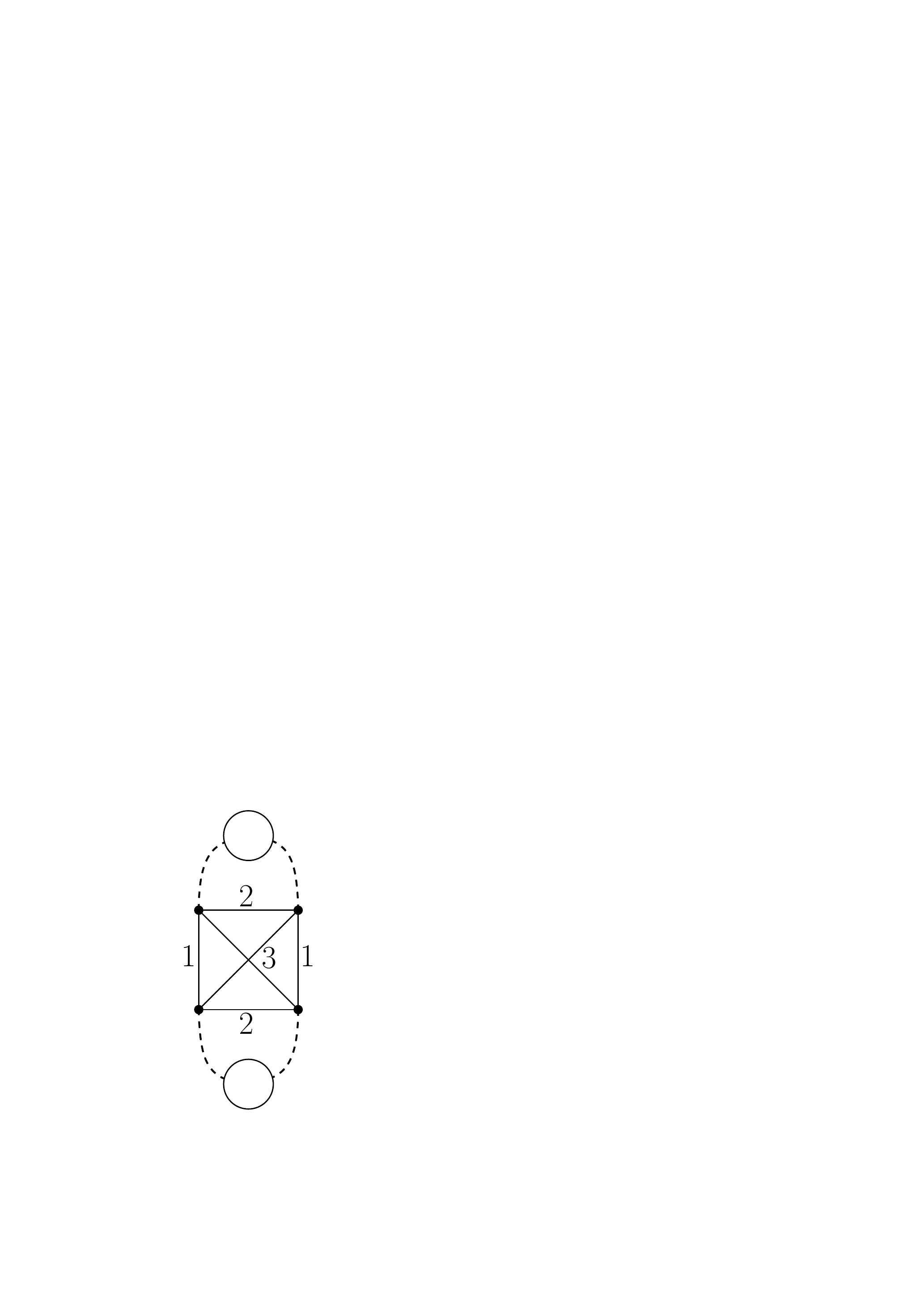} \end{array} \qquad
\begin{array}{c} \includegraphics[scale=.4]{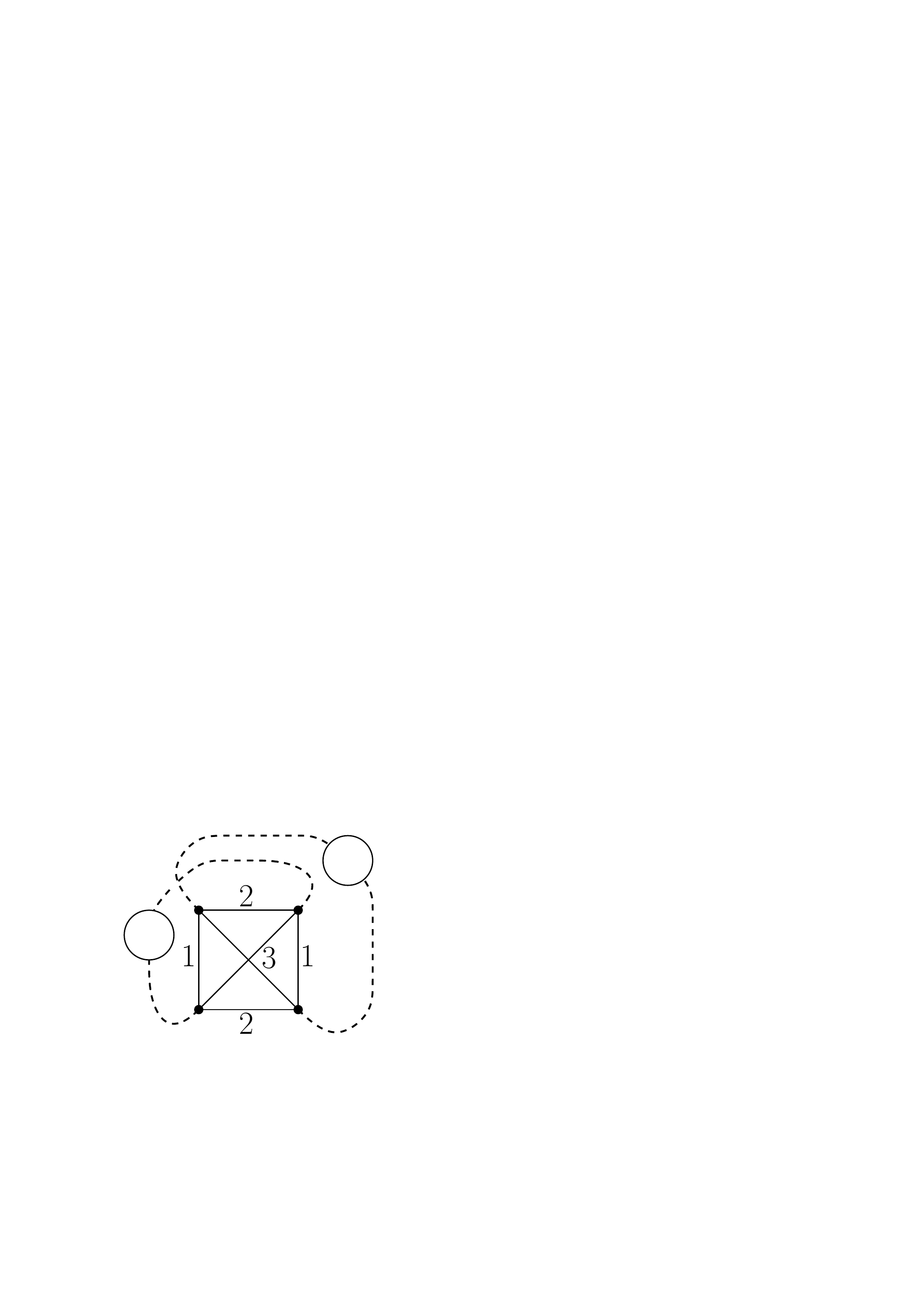} \end{array}
\end{equation}
where the blobs are \emph{melonic} 2-point functions.
\end{definition}

The following theorem is simply a reformulation of Theorem \ref{thm:O(N)Model} from \cite{CarrozzaTanasa}.

\begin{theorem} \label{thm:Gmax}
$\cG_{\max} = \cG_{\max}^{\text{odd}}\sqcup \cG_{\max}^{\text{even}}$, where 
\begin{itemize}
\item $\cG_{\max}^{\text{even}}$ is the set of all melonic graphs. They maximize the number of faces at even number of bubbles and $F(G) = 3b(G)/2+3$.
\item $\cG_{\max}^{\text{odd}}$ is the set of graphs made of a double tadpole decorated with melons. They maximize the number of faces at odd number of bubbles and $F(G) = 3b(G)/2 + 5/2$.
\end{itemize}
\end{theorem}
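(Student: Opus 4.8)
The plan is to prove, simultaneously and by induction on $b=b(G)$, the bound $F(G)\le\tfrac{3}{2}b+3$ for $b$ even and $F(G)\le\tfrac{3}{2}b+\tfrac{5}{2}$ for $b$ odd, together with the description of the saturating graphs. The engine is the contraction of a \emph{melonic dipole}. The relevant elementary fact is that inserting the melonic dipole on an edge of color $0$ raises $b$ by $2$ and $F$ by exactly $3$: the three faces running through that edge (one of each color) are merely lengthened, not split, while the three faces internal to the dipole are new. Hence contracting a melonic dipole in $G$ gives $G'$ with $b(G')=b(G)-2$ and $F(G')=F(G)-3$, it preserves the parity of $b$, and $G$ is melonic (respectively a melon-decorated double tadpole) if and only if $G'$ is --- both classes being closed under melonic dipole insertion, with dipole-free representatives $G_2$ and the three $G_1^{(c)}$ respectively. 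The base cases are the finite checks that $G_2$ is the unique graph with $b=2$ and $F=6$, and that the only graphs with $b=1$ are the three $G_1^{(c)}$, with $F=4$; both match the formulas.

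With the contraction in place, the theorem reduces to the assertion that \emph{every $G\in\cG$ with $b(G)\ge3$ and no melonic dipole has $F(G)$ strictly below the value of the corresponding formula}. Indeed, since the formula is attained at every $b$ by the canonical graph built from $G_2$ or from some $G_1^{(c)}$, this assertion forces any $G\in\cG_{\max}$ with $b(G)\ge3$ to contain a melonic dipole; contracting it keeps $G'$ in $\cG_{\max}$ at $b-2$ by the paragraph above and the induction, and unwinding the recursion exhibits $G$ as obtained from $G_2$ or from some $G_1^{(c)}$ by melonic-dipole insertions, i.e.\ as melonic or a melon-decorated double tadpole, with the claimed face count (adding $3$ per dipole). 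The converse inclusions follow by reading the same moves backwards, and the bound for a general $G$ is part of the same induction.

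For the reduced assertion I would use the flips of \cite{PlanarBubbles}, here applied to two edges of color $0$ incident to two \emph{adjacent} bubbles $B_1,B_2$ and reglued the other way. The aim is to show that, for $b\ge3$ and in the absence of a melonic dipole, some such flip --- possibly after a few preparatory ones that do not lower $F$ --- strictly raises $F$ at fixed $b$. This already gives $G\notin\cG_{\max}$, and since $F\le 6b$ the iteration terminates, necessarily at a graph that does contain a melonic dipole, whose face number is bounded by the formula through the contraction, so $F(G)$ lies strictly below it. Concretely one picks a well-chosen adjacent pair $B_1,B_2$ (for instance one sharing as many edges of color $0$ as possible, or carrying a shortest face), catalogs the ways the incident edges of color $0$ can attach subject to the rigid internal $K_4$ wiring of each bubble, and in each case exhibits a flip on two of those edges that does not decrease $F$ and moves the local pattern toward that of a melonic dipole (more shared color-$0$ edges, with matching colors across the two bubbles). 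Since faces avoiding the two cut edges are untouched, only the $O(1)$ faces through them need to be followed, each step changing $F$ by a controlled amount.

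The hard part is exactly this flip analysis. Two features make it delicate. First, one must rule out configurations that are locally optimal yet globally non-melonic, which requires a genuine case analysis of the color-$0$ attachments around the pair $B_1,B_2$; in contrast with \cite{PlanarBubbles}, where both flipped edges sit on a single bubble, here they live on different bubbles and one must control the joint neighborhood of the pair, in particular how the colors on $B_1$ and $B_2$ line up across their shared edges. Second, the double-tadpole sector has to be kept separate throughout: length-$2$ faces revisiting a single bubble --- the non-proper ones, which occur only for $b=1$ --- are what distinguish the odd family, so the counting that locates short faces and the ensuing flips must be arranged so as not to create or destroy tadpole structure spuriously. This is the combinatorial origin of the $\tfrac{5}{2}$ versus $3$ gap between the odd and even cases, and ensuring that at least one step in the process strictly increases $F$ (rather than merely leaving it unchanged) is the technical crux.
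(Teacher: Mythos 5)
Your overall scaffolding (induction on $b$, melonic-dipole contraction removing $2$ bubbles and $3$ faces, base cases $b=1,2$, and flips on edges attached to adjacent bubbles as the engine) is the same as the paper's, but the proposal stops exactly where the theorem actually lives. The entire content of the result is the assertion you defer --- that a graph with $b\ge 3$ and no melonic dipole (more precisely, in the paper's organization, no color-$0$ 2-edge-cut and no proper face of length $2$) cannot maximize $F$ --- and you give no proof of it, only a description of what a case analysis would have to accomplish. The paper's proof of this step is not a routine catalog: it needs Lemma \ref{thm:DeltaF} and Corollary \ref{thm:DeltaFTotal} to control $|\Delta F_c|\le 1$ under a flip, Lemma \ref{thm:Length2} to force 2-point functions next to proper length-$2$ faces, a separate splitting argument $F(G)=F(G_L)+F(G_R)-3$ for 2-edge-cuts (which is also what rules out the odd--odd gluings and produces the $5/2$ versus $3$ offset, a point your plan only gestures at), and then a three-way sub-case analysis $a)$, $b)$, $c)$ for a pair of adjacent bubbles.

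Moreover, your reduced assertion is stated in a form the paper does not (and arguably cannot easily) establish: you claim that, absent a melonic dipole, some flip --- after preparatory $F$-preserving ones --- \emph{strictly} raises $F$. In the paper's most delicate configuration (case $c$, where the first flip loses a face of each of two colors), the two flips performed only achieve $\Delta F=0$ in the worst sub-case, and no strictly increasing flip is exhibited at all; instead one contracts the melonic dipole created by the flips and derives a contradiction with the induction hypothesis, using the fact that the reduced graph has no proper face of length $2$ while melonic graphs and melon-decorated double tadpoles of that size do. So the terminating ``monotone flip'' scheme you propose is not known to work as stated, and without either proving your strict-increase claim or replacing it by an argument of this reduction-plus-contradiction type, the proof has a genuine hole at its technical core.
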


This accounts for the existence of the $1/N$ expansion, the melonic dominance at large $N$ and the NLO. The remaining of the paper is devoted to the proof.

\section{Proof}

\subsection{Number of faces of melonic graphs and the double tadpole}

The easiest part of the Theorem is the number of faces of the relevant graphs. This will also be used later in the proof of the rest of the theorem.

\begin{proposition} \label{thm:NumberFaces}
The number of faces of melonic graphs with $b$ bubbles is
\begin{equation}
F(G) = \frac{3}{2}b + 3.
\end{equation}
The number of faces of a melon-decorated double tadpole with $b$ bubbles is
\begin{equation}
F(G) = \frac{3}{2}(b-1) + 4.
\end{equation}
\end{proposition}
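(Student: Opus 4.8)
The plan is to prove both formulas at once by induction on $b(G)$, using the inductive (dipole‑insertion) definitions of melonic graphs and of melon‑decorated double tadpoles. The whole argument reduces to one combinatorial lemma, which I will call the \emph{dipole‑insertion lemma}: inserting a melonic dipole on any edge of color $0$ of a graph of $\cG$ (or of a $2$‑point graph) raises the number of bubbles by $2$ and the number of faces by $3$. Given this lemma, the two statements follow by iterating from the base graphs $G_2$ and $G_1^{(c)}$.

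To prove the dipole‑insertion lemma I would first record the elementary fact that in any graph of $\cG$ every edge of color $0$ lies in exactly one face of each color $c\in\{1,2,3\}$: starting from an endpoint of the edge and alternately following the unique incident edge of color $c$ and the unique incident edge of color $0$ is a deterministic walk, so the face of color $c$ through a fixed edge of color $0$ is unique, and since a face carries a well‑defined color the edge lies in exactly three faces, one per color. Then I would analyze the melonic dipole itself: cutting one edge of color $0$ in $G_2$ leaves a $2$‑point graph carrying, for each color $c$, one ``transit strand'' joining the two external legs (this is the face of color $c$ of $G_2$ that ran through the cut edge) and one internal closed face of color $c$ (the other face of color $c$ of $G_2$, which did not meet the cut edge). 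Now inserting the dipole on an edge $e$ of color $0$ of $G$ means deleting $e$ and gluing the two dipole legs onto the two loose half‑edges; each of the three faces of $G$ through $e$ is then merely rerouted along the matching transit strand of the dipole — it is neither split into two nor fused with any internal face, because the transit strand is a single open path and is distinct from the closed face of the same color — while the three internal closed faces of the dipole are added untouched. This gives $\Delta F=+3$, and $\Delta b=+2$ since the dipole has two bubbles.

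With the lemma in hand, I would close the argument as follows. For melonic graphs, the base case is $G_2$, which has two faces of each color, hence $F(G_2)=6=\tfrac32\cdot 2+3$; since by definition every melonic graph with $b+2$ bubbles is obtained from one with $b$ bubbles by a single dipole insertion, the lemma gives $F=6+3\cdot\frac{b-2}{2}=\tfrac32 b+3$. For melon‑decorated double tadpoles, the base case is $G_1^{(c)}$, which has $2$ faces of color $c$ and one face of each of the two other colors, hence $F(G_1^{(c)})=4=\tfrac32(1-1)+4$; and every melon‑decorated double tadpole with $b>1$ bubbles is obtained from one with $b-2$ bubbles by a single dipole insertion (remove an outermost dipole inside one of the melonic $2$‑point blobs, which is itself built by iterated dipole insertions), so the lemma gives $F=4+3\cdot\frac{b-1}{2}=\tfrac32(b-1)+4$.

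The only real obstacle is the dipole‑insertion lemma, and within it the point to check carefully is precisely that the three faces of $G$ through the cut edge are each only lengthened rather than being split in two or merged with an internal face of the dipole; this requires looking at the strand structure of the melonic dipole, but it is a finite check. Everything else — the two base cases (already stated in the text) and the fact that both families are exactly the iterated dipole insertions on $G_2$ and on $G_1^{(c)}$ — is immediate from the definitions given above.
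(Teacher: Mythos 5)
Your proposal is correct and follows essentially the same route as the paper, whose proof is exactly this elementary induction: a melonic dipole insertion adds two bubbles and three faces, with the offsets fixed by the base cases $F(G_2)=6$ and $F(G_1^{(c)})=4$. Your careful justification that the three faces through the cut edge are merely rerouted (neither split nor merged) simply makes explicit the strand-structure check the paper leaves implicit.
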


\begin{proof}
Those formulae result from elementary inductions. A melonic dipole adds two bubbles and three faces, which explains the coefficient $3/2$. The offset is found by observing that the melonic graph $G_2$ with two bubbles has six faces and a double tadpole $G_1^{(c)}$ has four faces.
\end{proof}

\subsection{Flips}

Flips are the most essential tools of the proof.

\begin{definition}
Consider two edges of color 0 $e_L, e_R$ in $G$. A \emph{flip} of $\{e_L, e_R\}$ is the transformation from $G$ to $G'$ as follows
\begin{equation}
\begin{array}{c} \includegraphics[scale=.5]{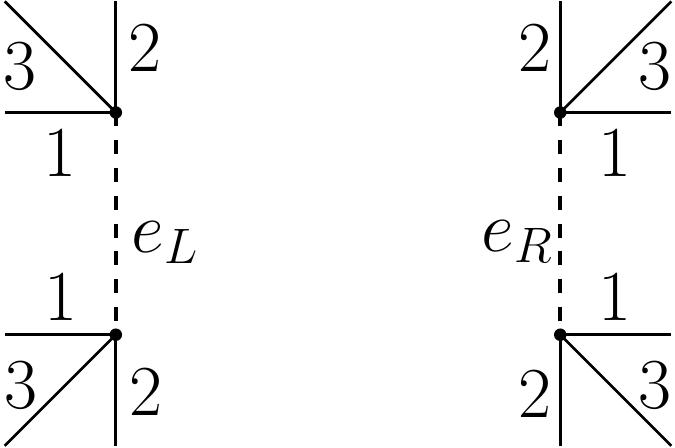} \end{array} \qquad \to \qquad  \begin{array}{c} \includegraphics[scale=.5]{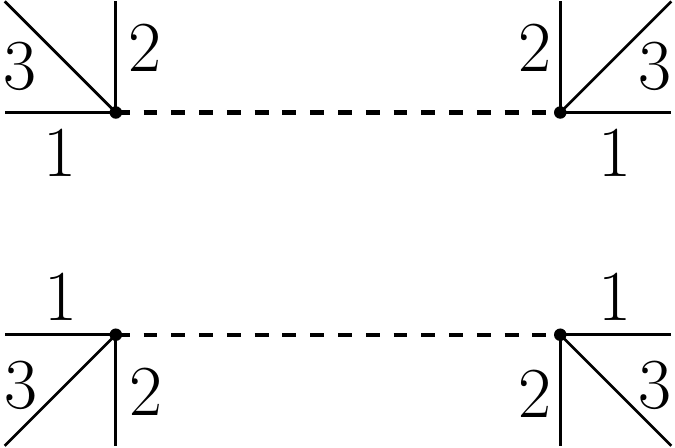} \end{array} \qquad \text{or} \qquad \begin{array}{c} \includegraphics[scale=.5]{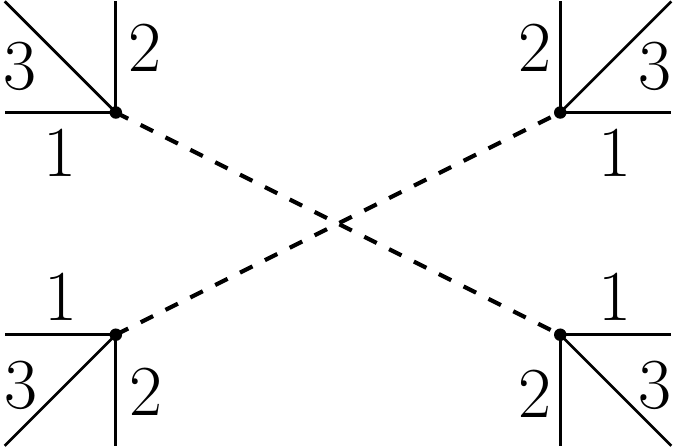} \end{array}
\end{equation}
\end{definition}

\begin{lemma} \label{thm:DeltaF}
Consider that $\{e_L, e_R\}$ is not a 2-edge-cut. Denote $\Delta_{G\to G'} F_c = F_c(G') - F_c(G)$ the variation of the number of faces of color $c\in\{1, 2, 3\}$ in a flip of $\{e_L, e_R\}$. Then $|\Delta_{G\to G'} F_c|\leq 1$.

Moreover, denote $F_c^{\{e_L, e_R\}}(G)$ the number of faces of color $c$ which go along $e_L$ or $e_R$. Then
\begin{itemize}
\item if $F_c^{\{e_L, e_R\}}(G) = 2$, then $\Delta_{G\to G'} F_c=-1$
\item if $F_c^{\{e_L, e_R\}}(G) = 1$, then there exists a flip such that $\Delta_{G\to G'} F_c=1$ while the other gives $\Delta_{G\to G'} F_c=0$.
\end{itemize}
\end{lemma}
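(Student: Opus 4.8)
The plan is to analyze a flip purely locally, by tracking what happens to the faces that pass through the two cut edges $e_L$ and $e_R$. Fix a color $c$. Each edge of color $0$ is traversed by exactly one face of color $c$ (since a face of color $c$ alternates edges of colors $0$ and $c$, and every vertex has a unique incident $0$-edge and a unique incident $c$-edge). Hence the set of faces of color $c$ going along $e_L$ or $e_R$ has either one element (the same face uses both edges) or two elements. Since $\{e_L,e_R\}$ is not a $2$-edge-cut, cutting both edges leaves $G$ connected; I would record this fact because it controls whether the two ``half-edges'' on each side belong to the same face-strand or not, and it is exactly what rules out the degenerate reconnection that would change $F_c$ by $2$.

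The key step is the local surgery picture. Cutting $e_L$ and $e_R$ produces four loose half-edges $a_L, b_L$ (from $e_L$) and $a_R, b_R$ (from $e_R$); the two reconnections in a flip are $\{a_L b_R, b_L a_R\}$ versus $\{a_L a_R, b_L b_R\}$ (one of which is the original pairing $\{a_L b_L, a_R b_R\}$ is \emph{not} among them, so both genuinely change the gluing). For each color $c$, the $c$-face structure near these half-edges is encoded by a pairing of $\{a_L,b_L,a_R,b_R\}$ into ``which half-edge the $c$-face-strand continues to''; running the strands through the rest of the graph, this is determined by at most two strands. A short case analysis on how these two strands connect the four half-edges (the ``chord diagram'' on four points, which has exactly three matchings) shows that any reconnection merges two strands into one, splits one strand into two, or leaves the count unchanged — each changing $F_c$ by at most $1$, giving $|\Delta_{G\to G'}F_c|\le 1$. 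When $F_c^{\{e_L,e_R\}}(G)=2$, the two strands are distinct; both nontrivial reconnections necessarily join them, so $\Delta_{G\to G'}F_c=-1$ in either case. When $F_c^{\{e_L,e_R\}}(G)=1$, the single strand visits all four half-edges in some cyclic order; one reconnection cuts it into two (giving $+1$) and the other reconnection reconnects it into a single strand (giving $0$), and the non-$2$-edge-cut hypothesis guarantees this is consistent globally — the $+1$ reconnection cannot accidentally also disconnect a face into three pieces.

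The main obstacle I anticipate is making the ``chord diagram on four points'' bookkeeping rigorous and color-by-color simultaneously: a flip is a single operation, but the two candidate reconnections $G'$ must be the same for all three colors, so I must check that the assignment of ``which reconnection gives $+1$ for color $c$'' is compatible across colors when one wants later (in subsequent lemmas) to improve several colors at once — although for this lemma alone, each color is treated independently and only the per-color statement is claimed. A secondary subtlety is confirming that the non-$2$-edge-cut hypothesis is genuinely needed and correctly used: if $\{e_L,e_R\}$ \emph{were} a $2$-edge-cut, one reconnection could create a new connected component carrying its own faces, and the strand count could jump by more than one; I would state explicitly where connectivity of $G\setminus\{e_L,e_R\}$ enters, namely in excluding the matching of the four half-edges that would pair $a_L$ with $b_L$ through the graph.
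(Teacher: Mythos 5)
Your strand-matching analysis is correct and is essentially the paper's own argument: the same case distinction on $F_c^{\{e_L,e_R\}}(G)=2$ versus $1$, followed by checking which reconnections merge, split, or preserve the color-$c$ cycles, done there pictorially rather than via a chord diagram on the four half-edges. One small correction to your side remarks: the non-2-edge-cut hypothesis does not exclude the matching that pairs the two half-edges of $e_L$ with each other through the graph (that is precisely the $F_c^{\{e_L,e_R\}}=2$ case, which you handle correctly, and even in the 2-edge-cut case no color can change by more than one); its real role is to guarantee that the flipped graph remains connected, not to control the face count.
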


\begin{proof}
Either it is the same face of color $c$ along $e_L$ and $e_R$, meaning $F_c^{\{e_L, e_R\}}(G) = 1$, or they are two different faces i.e. $F_c^{\{e_L, e_R\}}(G) = 2$ .

Let us set $c=1$ for concreteness. The case $F_c^{\{e_L, e_R\}}(G) = 2$ means that there are two paths of colors $\{0,1\}$ as follows
\begin{equation}
\begin{array}{c} \includegraphics[scale=.4]{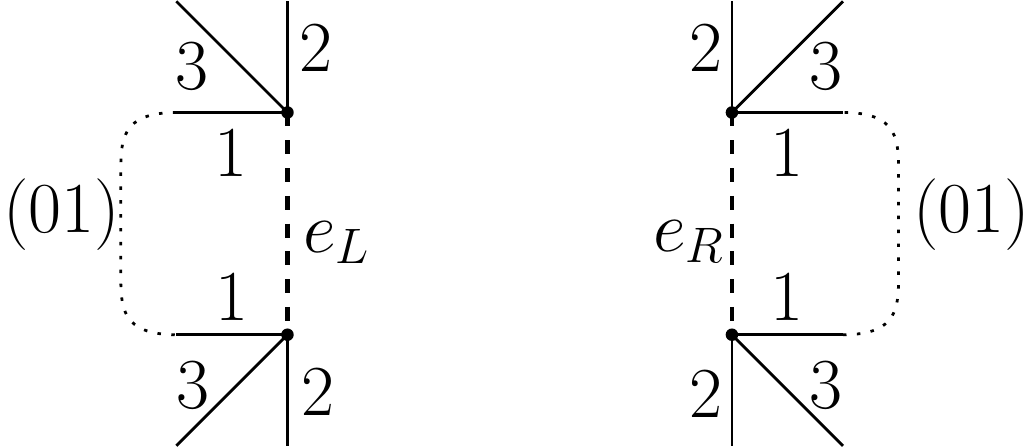} \end{array}
\end{equation}
Clearly after any of the two possible flips those two faces merge meaning $\Delta_{G\to G'} F_c=-1$.

In the case $F_c^{\{e_L, e_R\}}(G) = 1$, the situation can be 
\begin{equation}
\begin{array}{c} \includegraphics[scale=.4]{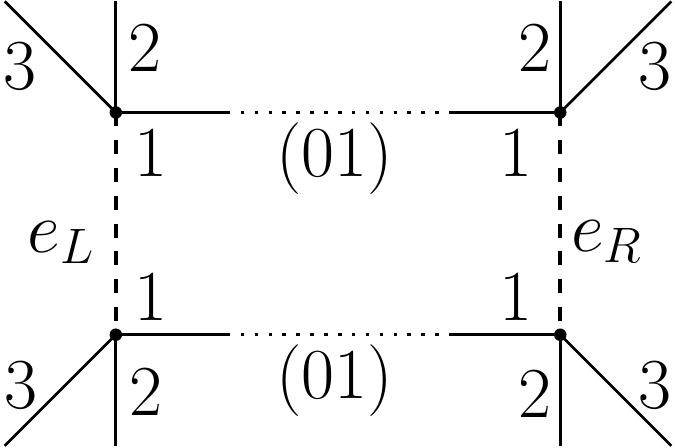} \end{array} \qquad \text{or} \qquad 
\begin{array}{c} \includegraphics[scale=.4]{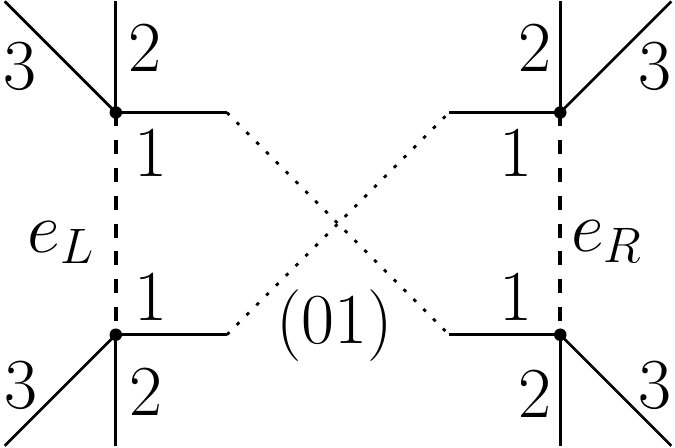} \end{array}
\end{equation}
and for both it is immediate to see that there is a flip which splits the face into two while the other flip does not change the number of faces.
\end{proof}

\begin{corollary} \label{thm:DeltaFTotal}
If there are two colors $c_1, c_2$ such that $F_{c_1}^{\{e_L, e_R\}}(G) = F_{c_2}^{\{e_L, e_R\}}(G)=1$, then there exists a flip such that
\begin{equation}
\Delta_{G\to G'} F =\sum_{c\in\{1,2,3\}} \Delta_{G\to G'} F_c >0
\end{equation}
except possibly $\Delta_{G\to G'} F=0$ if $F_{c_3}^{\{e_L, e_R\}}(G)=2$ for the third color $c_3$.
\end{corollary}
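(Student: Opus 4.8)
The plan is to obtain the corollary as a direct bookkeeping consequence of Lemma \ref{thm:DeltaF}, applied colour by colour, with the one genuine subtlety being that for a fixed pair $\{e_L, e_R\}$ there are only two admissible flips, so a \emph{single} flip has to be made to work simultaneously for all three colours. Throughout I keep the standing hypothesis that $\{e_L, e_R\}$ is not a 2-edge-cut, which is what makes Lemma \ref{thm:DeltaF} applicable in the first place.

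First I would fix the colour $c_1$. Since $F_{c_1}^{\{e_L, e_R\}}(G) = 1$, Lemma \ref{thm:DeltaF} gives, among the two possible flips, one — call it $f$ — with $\Delta_{G\to G'} F_{c_1} = +1$ (the other giving $0$). I then evaluate the two remaining colours for this \emph{same} flip $f$. For $c_2$: because $F_{c_2}^{\{e_L, e_R\}}(G) = 1$ as well, the $F_c^{\{e_L,e_R\}}=1$ branch of Lemma \ref{thm:DeltaF} says that each of the two flips changes $F_{c_2}$ by $0$ or $+1$, so $\Delta_{G\to G'} F_{c_2}\ge 0$ for $f$ regardless of whether $f$ is the ``good'' flip for $c_2$. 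For $c_3$: the bound $|\Delta_{G\to G'} F_{c_3}|\le 1$ gives $\Delta_{G\to G'} F_{c_3}\ge -1$, and if in addition $F_{c_3}^{\{e_L, e_R\}}(G) = 1$ the same branch upgrades this to $\Delta_{G\to G'} F_{c_3}\ge 0$. Here I would also state explicitly why $F_{c_3}^{\{e_L,e_R\}}(G)\in\{1,2\}$ is the only dichotomy: each edge of colour $0$ lies on exactly one face of each colour, so $e_L$ and $e_R$ lie either on a common $c_3$-face or on two distinct ones.

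Summing the three contributions for the flip $f$ then yields $\Delta_{G\to G'} F \ge 1 + 0 + (-1) = 0$ in all cases, and $\Delta_{G\to G'} F \ge 1 > 0$ as soon as $F_{c_3}^{\{e_L, e_R\}}(G) = 1$. The estimate can degenerate to $\Delta_{G\to G'} F = 0$ only when $F_{c_3}^{\{e_L, e_R\}}(G) = 2$ and $f$ fails to be the good flip for $c_2$; if $f$ happens to be the good flip for $c_2$ one instead gets $1+1-1 = 1 > 0$. This is exactly the exception permitted by the statement.

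I do not expect any real obstacle: the content is entirely contained in Lemma \ref{thm:DeltaF}. The only point requiring care is not to conflate the two ``good flips'' promised separately for $c_1$ and $c_2$ — they may well be distinct, since only two flips exist in total — and to observe that this mismatch costs at most one unit, which is precisely absorbed by the guaranteed $+1$ on $c_1$ against the worst-case $-1$ on $c_3$.
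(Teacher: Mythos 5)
Your proof is correct and follows essentially the same route as the paper: apply Lemma \ref{thm:DeltaF} colour by colour, choose the flip that gains a face for $c_1$, note that $F_{c_2}^{\{e_L,e_R\}}=1$ forces $\Delta_{G\to G'}F_{c_2}\geq 0$ for either flip, and absorb the worst-case $-1$ on $c_3$ (which occurs exactly when $F_{c_3}^{\{e_L,e_R\}}=2$) by the guaranteed $+1$ on $c_1$. Your added remarks (the dichotomy $F_{c_3}^{\{e_L,e_R\}}\in\{1,2\}$ and the warning that the ``good'' flips for $c_1$ and $c_2$ need not coincide) just make explicit what the paper's terser argument leaves implicit.
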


\begin{proof}
From the above lemma, if $F_{c}^{\{e_L, e_R\}}=1$ for all colors, then there exists a flip such that $\Delta_{G\to G'} F_{c_1}>0$ while $\Delta_{G\to G'} F_{c_2}, \Delta_{G\to G'} F_{c_3} \geq 0$. If however $F_{c_3}^{\{e_L, e_R\}}=2$ then the lemma only guarantees $\Delta_{G\to G'} F\geq 0$.
\end{proof}

\subsection{Proper faces of length 2}

\begin{lemma} \label{thm:Length2}
If $G\in\cG_{\max}$ has a proper face of length 2 and color 1, then it has the form
\begin{equation}
G = \begin{array}{c} \includegraphics[scale=.4]{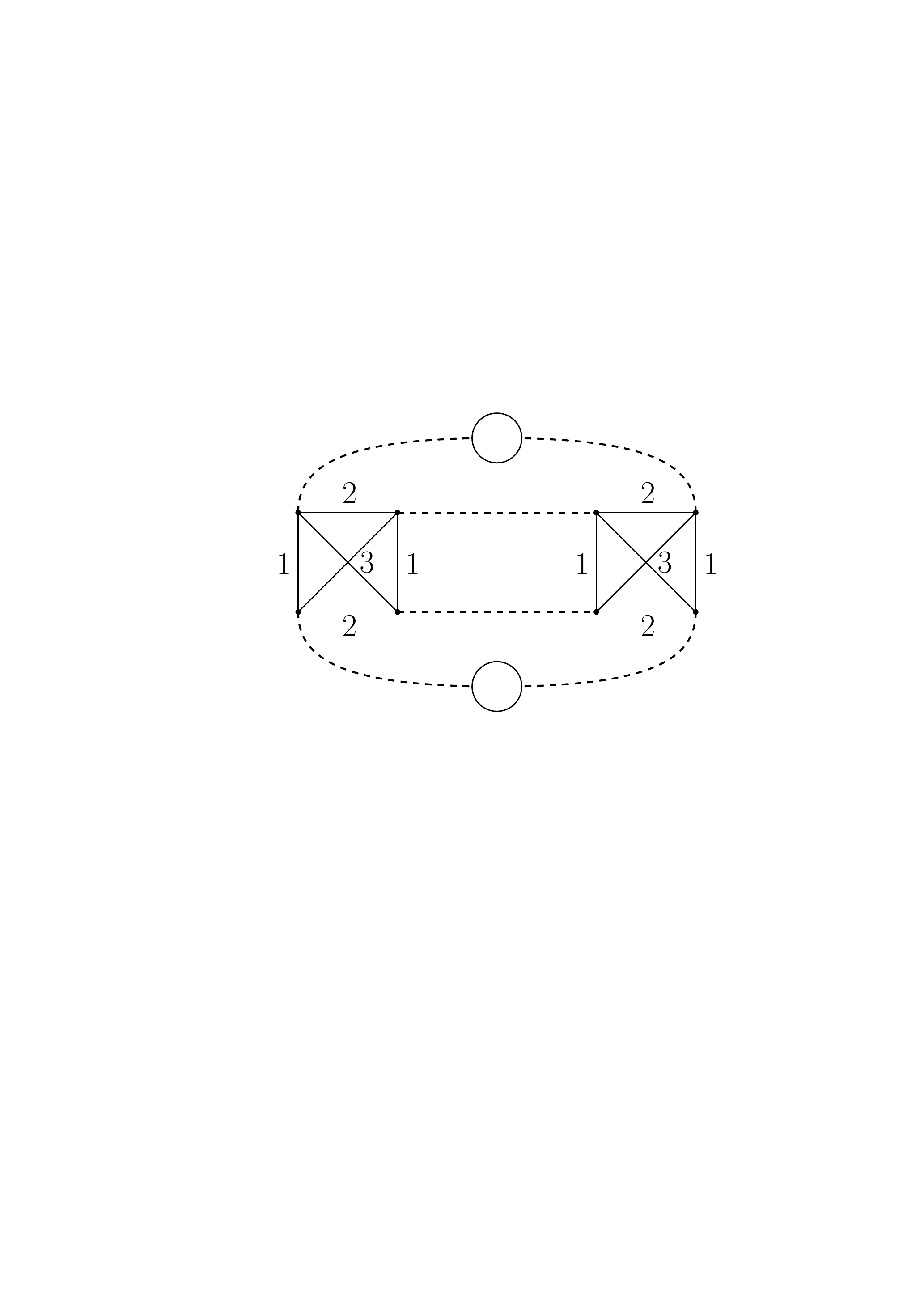} \end{array}
\end{equation}
and the same result holds with permuted colors.
\end{lemma}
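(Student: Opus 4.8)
The plan is to exploit the defining property of $\cG_{\max}$: any flip on a pair of edges which is not a 2-edge-cut cannot strictly increase the total number of faces, since the number of bubbles is unchanged. So I would first set up the local picture around a proper face of length 2 and color 1, then look for pairs $\{e_L,e_R\}$ on which Corollary~\ref{thm:DeltaFTotal} would force $\Delta F>0$ unless the graph already has the claimed rigid structure.

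Concretely, let the proper face of length~2 and color~1 visit two distinct bubbles $B_1$ and $B_2$, joined by two edges of color~$0$, say $e$ and $e'$, which together with the color-1 edges at $B_1,B_2$ form the length-2 face. First I would argue that $\{e,e'\}$ is itself a 2-edge-cut (otherwise one could flip it): the two color-1 strands running along $e$ and $e'$ constitute one face, so $F_1^{\{e,e'\}}=1$, and then by Lemma~\ref{thm:DeltaF} there is a flip gaining a color-1 face; to stay in $\cG_{\max}$ the other colors must lose a face, which by the Corollary forces $F_2^{\{e,e'\}}=F_3^{\{e,e'\}}=2$ — but that situation is exactly the one that pins down the local structure, so I would instead directly analyze which bubbles $B_1,B_2$ are and how their remaining color-$0$ edges attach. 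Since each $K_4$ bubble has four vertices, $B_1$ and $B_2$ each have two further color-$0$ half-edges; I would enumerate the (few) ways the color-2 and color-3 faces can close up consistently with $F_2^{\{e,e'\}}=F_3^{\{e,e'\}}=2$, and show the only possibility is that $B_1$ and $B_2$ are glued to each other along a second pair of color-$0$ edges as well, i.e. they form the melonic configuration drawn in the statement (two bubbles connected by two melonic dipoles / the $G_2$-like block), with the rest of $G$ attached through a single 2-point function.

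The key combinatorial input is the tetrahedral structure of $K_4$: opposite edges of the tetrahedron carry the "parallel" index, so the way a color-1 face of length 2 forces the color-2 and color-3 faces to also have length~2 between the same two bubbles is essentially forced by the symmetry of $K_4$. I would make this explicit by drawing the four vertices of $B_1$ and $B_2$, noting that the proper face of color~1 uses one color-1 edge in each bubble, and tracking the color-2 and color-3 edges incident to the same four vertices; the constraint $F_2^{\{e,e'\}}=F_3^{\{e,e'\}}=2$ (needed to block the face-increasing flip) then leaves only the pictured gluing. After that, whatever hangs off the two "free" color-$0$ half-edges of $B_1\cup B_2$ is connected to it by a 2-edge-cut, hence by a 2-point function, giving exactly the asserted form of $G$.

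The main obstacle I anticipate is the case analysis in the middle step: ruling out all the ways $B_1$ and $B_2$ could be attached to the rest of the graph (rather than to each other) while still not admitting any face-increasing flip. There could be several such candidate local configurations, and for each I need to exhibit an explicit non-2-edge-cut pair $\{e_L,e_R\}$ with two colors having $F_c^{\{e_L,e_R\}}=1$ and the third not equal to~$2$, so that Corollary~\ref{thm:DeltaFTotal} yields $\Delta F>0$, contradicting $G\in\cG_{\max}$. Getting the flip pair right in each case — in particular checking it is not a 2-edge-cut and that the third color does not spoil the gain — is the delicate part; the rest is bookkeeping on the tetrahedral bubble. By the obvious $S_3$ symmetry permuting the colors, proving it for color~1 gives all three cases at once.
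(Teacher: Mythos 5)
Your overall instinct --- use maximality of $G\in\cG_{\max}$ to forbid face-increasing flips and thereby force 2-point functions --- is the right one, but the pair of edges you chose to flip cannot carry the argument, and the deductions you draw from it do not hold. Flipping the two color-0 edges $\{e,e'\}$ of the length-2 face itself only yields, via Corollary \ref{thm:DeltaFTotal}, that $F_2^{\{e,e'\}}$ and $F_3^{\{e,e'\}}$ are not \emph{both} equal to $1$; it does not force both to equal $2$ (if, say, $F_2^{\{e,e'\}}=1$ and $F_3^{\{e,e'\}}=2$, the color-1-gaining flip may be the color-2-neutral one and $\Delta F=0$ is allowed, so no contradiction with maximality). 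Worse, even granting $F_2^{\{e,e'\}}=F_3^{\{e,e'\}}=2$, this is a \emph{global} statement about which faces pass through $e$ and $e'$; it is perfectly compatible with the four remaining vertices of $B_1\cup B_2$ being attached to the rest of the graph by four independent edges, the color-2 and color-3 faces closing up through long detours elsewhere. So there is no finite local enumeration that pins down the claimed structure, and indeed the structure you aim at (a second pair of direct color-0 edges between $B_1$ and $B_2$, with the rest attached by a single 2-point function) is not the lemma's conclusion, which allows two possibly non-trivial 2-point functions joining the remaining vertices pairwise. You would also need a separate treatment of the case where $\{e,e'\}$ is a 2-edge-cut, where Lemma \ref{thm:DeltaF} does not apply.

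The missing idea is to flip the \emph{other} pair of edges. The $K_4$ structure pairs the remaining vertices across the two bubbles: for the remaining vertex $v_1$ of $B_1$ there is a remaining vertex $v_2$ of $B_2$ such that both the color-2 strand and the color-3 strand leaving $v_1$ travel inside the two-bubble configuration (through $e$ or $e'$) and end at $v_2$. Let $e_L,e_R$ be the color-0 edges at $v_1,v_2$. If $v_1$ and $v_2$ are \emph{not} joined by a 2-point function, then by definition $e_L\neq e_R$ and $\{e_L,e_R\}$ is not a 2-edge-cut, and automatically $F_2^{\{e_L,e_R\}}=F_3^{\{e_L,e_R\}}=1$; crucially, one and the same flip splits both the color-2 and the color-3 face, so $\Delta F\geq -1+2>0$ whatever happens to color 1. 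Note that Corollary \ref{thm:DeltaFTotal} alone would not give strict positivity here, since $F_1^{\{e_L,e_R\}}$ may equal $2$; the simultaneous gain of two colors by a single flip is what closes the argument, contradicting $G\in\cG_{\max}$ and forcing the two 2-point functions of the asserted form. This is the route the paper takes, and your proposal as written does not reach it.
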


\begin{proof}
$G$ has a proper face of length 2 and color 1, thus contains a portion like
\begin{equation}
\begin{array}{c} \includegraphics[scale=.4]{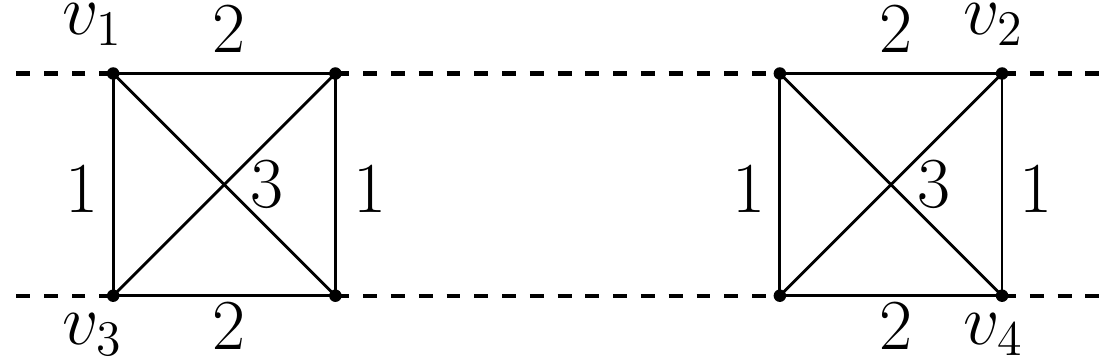} \end{array}
\end{equation}
and we will show that there must be a 2-point function between $v_1$ and $v_2$ and between $v_3$ and $v_4$.

Notice that if there is a 2-point function between $v_1$ and $v_2$ then there is also one between $v_3$ and $v_4$ (and the other way around). We therefore assume that there are no such 2-point functions, so that the edges incident to $v_1 $ and $v_2$ are distinct and do not form a 2-edge-cut. The situation is as follows
\begin{equation}
\begin{array}{c} \includegraphics[scale=.4]{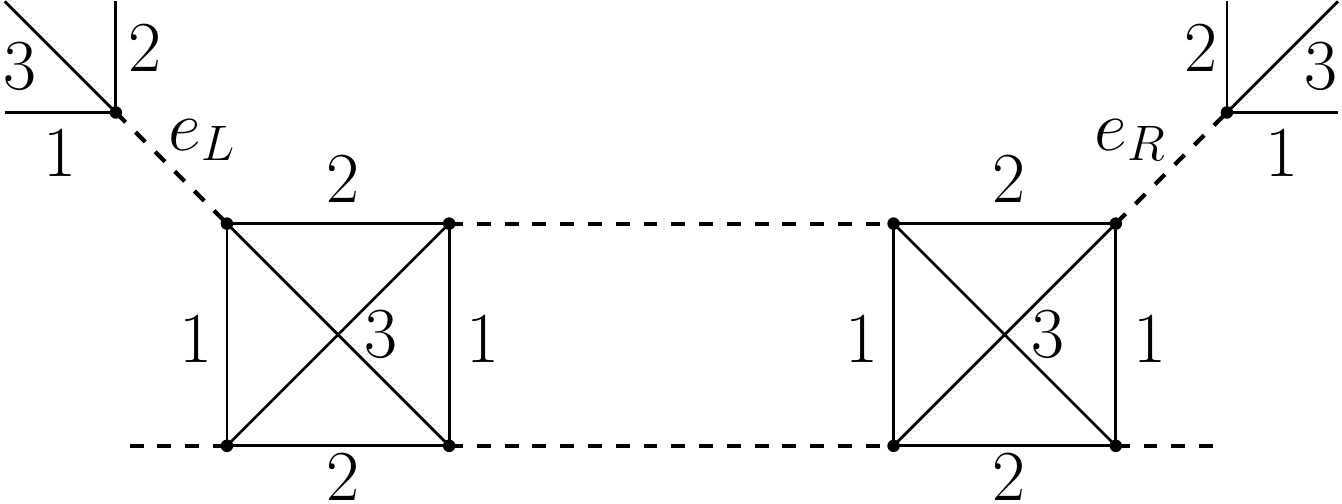} \end{array}
\end{equation}
Notice that $F^{\{e_L, e_R\}}_2 = F^{\{e_L, e_R\}}_3 =1$, therefore from Lemma \ref{thm:DeltaF} there is a flip which increases the number of faces of color 2, and same for the color 3. As it turns out it is the same flip for both colors,
\begin{equation}
\begin{array}{c} \includegraphics[scale=.4]{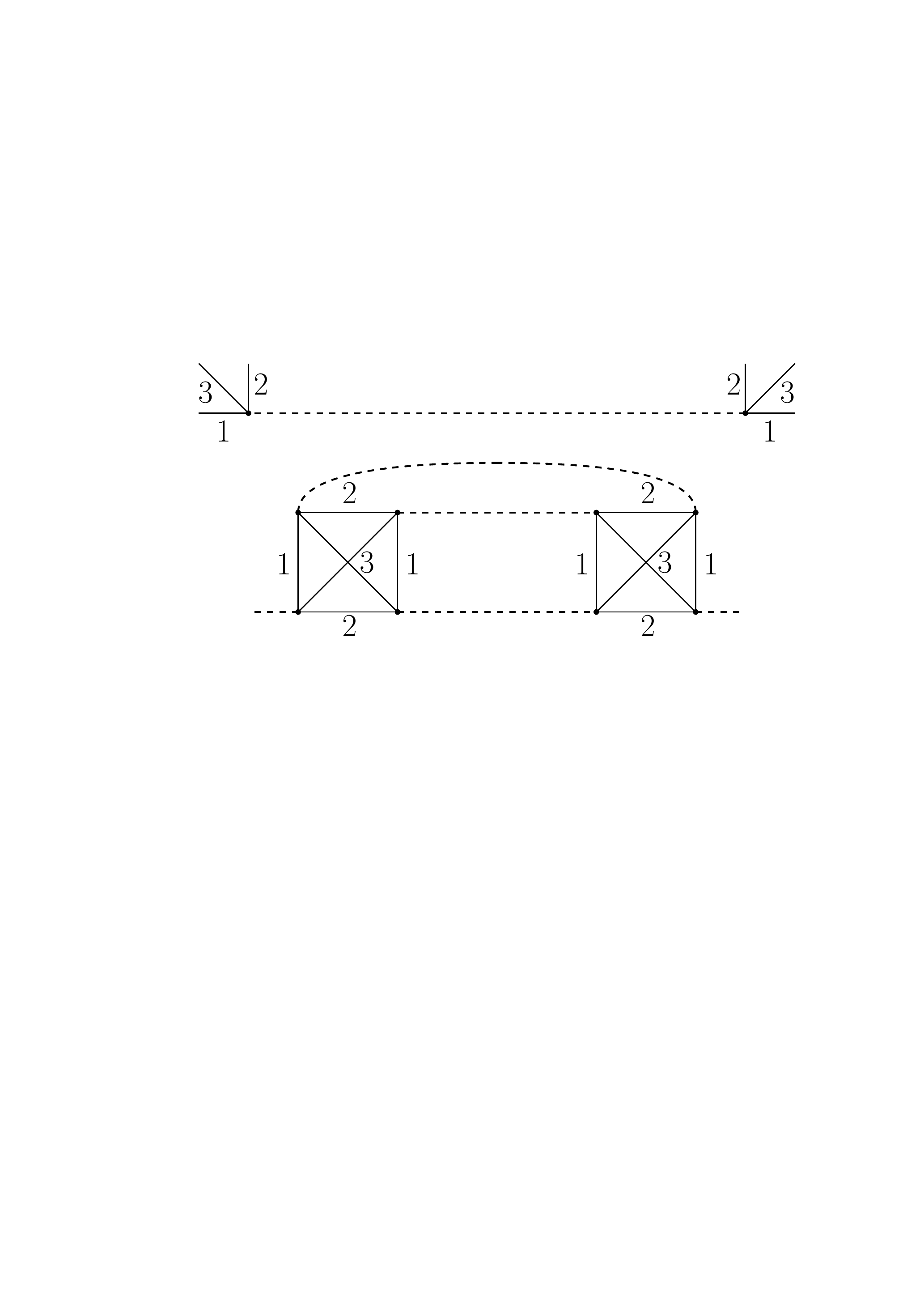} \end{array}
\end{equation}
The graph after the flip is connected. Notice that the variation of the number of faces of color 1 is not determined, but from Lemma \ref{thm:DeltaF} one has $|\Delta_{G\to G'} F_1|\leq 1$. Therefore the total variation is
\begin{equation}
\Delta_{G\to G'} F_1 + \Delta_{G\to G'} F_2 + \Delta_{G\to G'} F_3 = \Delta_{G\to G'} F_1 + 2 > 0.
\end{equation}
\end{proof}

\subsection{Proof of Theorem \ref{thm:Gmax}}

\begin{proof}
This is proved by induction on the number of bubbles. 
\begin{itemize}
\item One bubble: these are $G_1^{(1)}, G_1^{(2)}, G_1^{(3)}$.
\item Two bubbles: this is $G_2$, as can be checked by hand.
\end{itemize}

Assume that the lemma holds up to $n\geq 2$ bubbles and consider $G\in \cG_{\max}$ with $n+1$ vertices. We consider the following cases.
\begin{description}[wide=0pt]
\item[$G$ has a 2-edge-cut of color 0] we can then make a flip as follows
\begin{equation}
G = \begin{array}{c} \includegraphics[scale=.4]{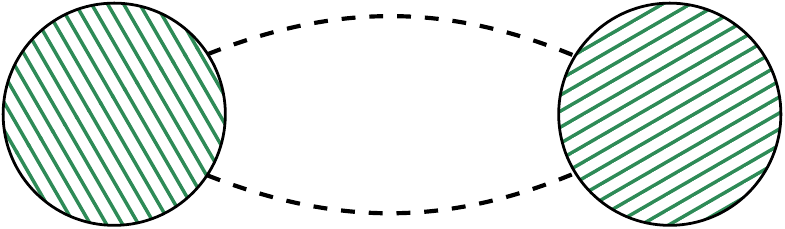} \end{array} \qquad \to \qquad G_L = \begin{array}{c} \includegraphics[scale=.4]{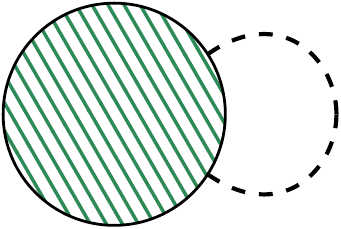} \end{array} \qquad G_R =\begin{array}{c} \includegraphics[scale=.4]{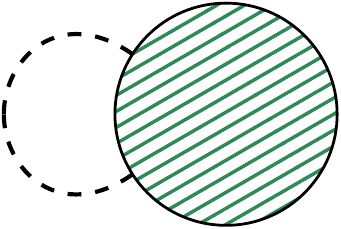} \end{array}
\end{equation}
Then 
\begin{equation}
F(G) = F(G_L) + F(G_R) - 3,
\end{equation}
so that $G\in \cG_{\max}$ requires $G_L, G_R\in \cG_{\max}$.
\begin{itemize}
\item If $b(G_L)$ and $b(G_R)$ are even, they are melonic and so is $G$ by construction.
\item If $b(G_L)$ is even and $b(G_R)$ is odd (or the other way around), it means that $G_L$ is melonic and $G_R$ is a tadpole decorated with melons. $G$ is thus also a tadpole decorated with melons.
\item If $b(G_L)$ and $b(G_R)$ are both odd, then $G$ is not in the family of the theorem. However, $b$ is even, and from Proposition \ref{thm:NumberFaces}, the number of faces of $G$ is
\begin{equation}
F(G) = \frac{3}{2}(b_L-1)+4 + \frac{3}{2}(b_R-1)+4-3 = \frac{3}{2}b(G) + 2
\end{equation}
which is less than the number of faces of a melonic graph with $b$ bubbles. We conclude that $G\not\in \cG_{\max}$.
\end{itemize}

\item[$G$ has a proper face of length 2] Then from Lemma \ref{thm:Length2} it either has a 2-edge-cut, which was treated above, or it is the melonic graph on two bubbles $G_2$.

\item[$G$ has no proper face of length 2 and two bubbles connected by more than one edge of color 0] Consider two bubbles connected by an edge and investigate whether they can be connected by other edges of color 0. Since there cannot be a proper face of length 2, a second edge of color 0 between them must be as follows
\begin{equation}
\begin{array}{c} \includegraphics[scale=.4]{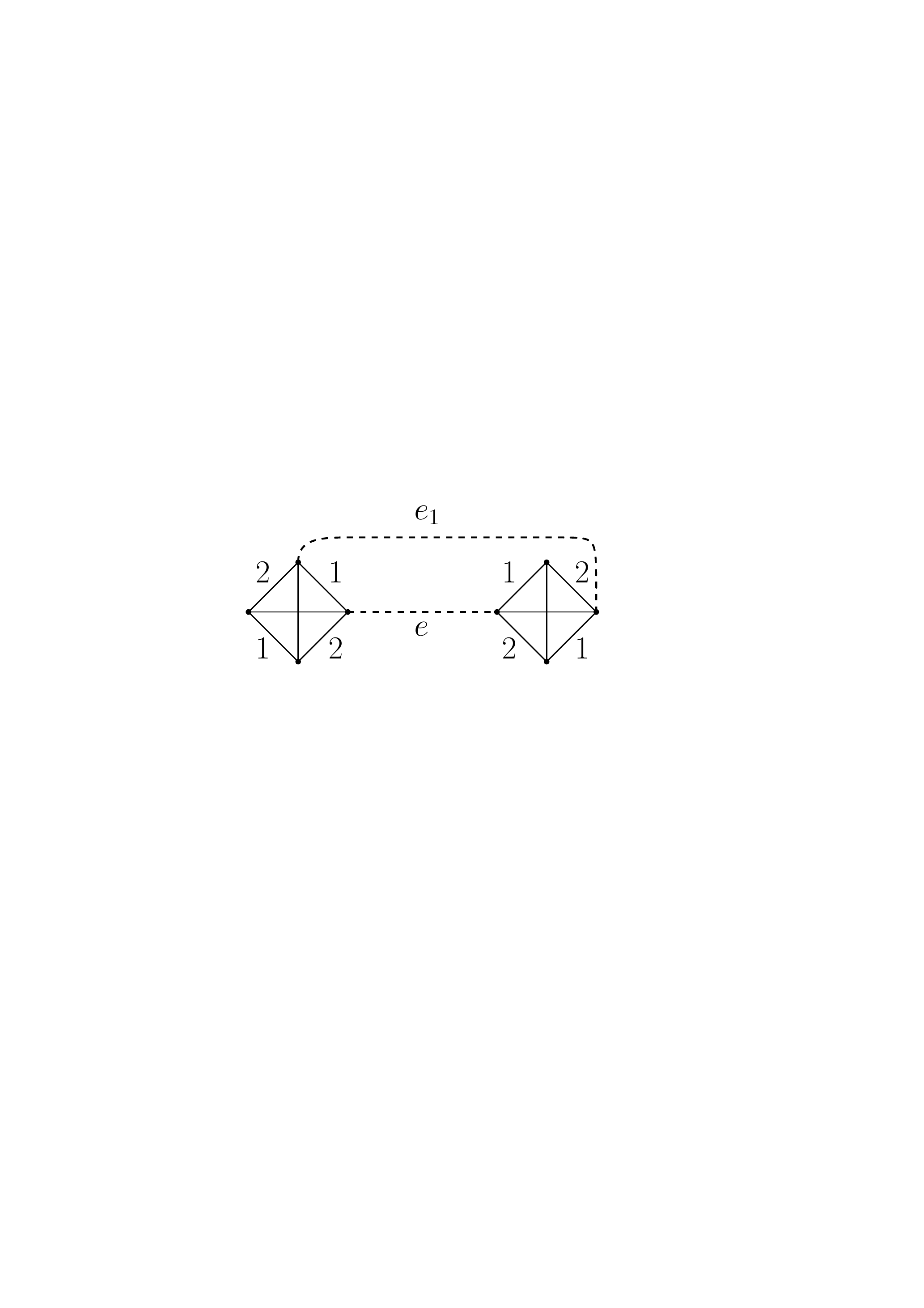} \end{array}
\end{equation}
up to color exchange. Notice an edge of color 0 between any of the remaining vertices of those bubbles would leave only two vertices to connect to the rest of the graph, i.e. $G$ would have a 2-edge-cut (or only has two bubbles but we assumed it has more).

We can thus consider that the remaining vertices of those two bubbles are connected to distinct vertices
\begin{equation}
\begin{array}{c} \includegraphics[scale=.4]{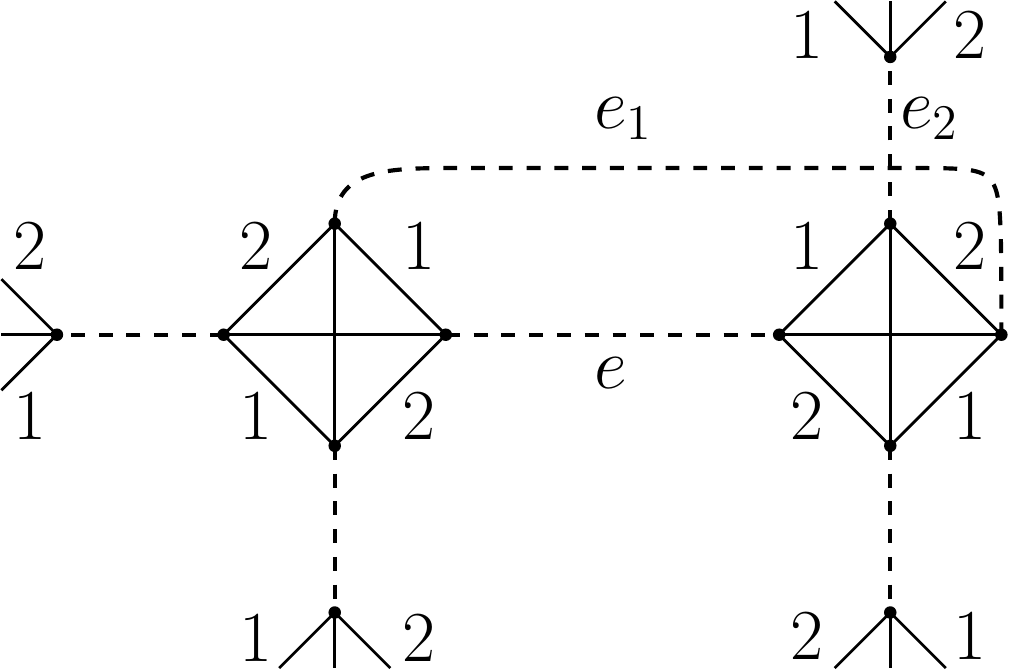} \end{array}
\end{equation}
The edges $e_1, e_2$ have two faces in common, $F_{1}^{\{e_1, e_2\}}(G) = F_2^{\{e_1, e_2\}}(G) = 1$. From the corollary \ref{thm:DeltaFTotal} we find that a flip of $\{e_1, e_2\}$ results in $\Delta_{G\to G'} F\geq 0$. It can actually vanish only when $F_3^{\{e_1, e_2\}}(G)=2$, i.e.
\begin{equation}
\begin{array}{c} \includegraphics[scale=.4]{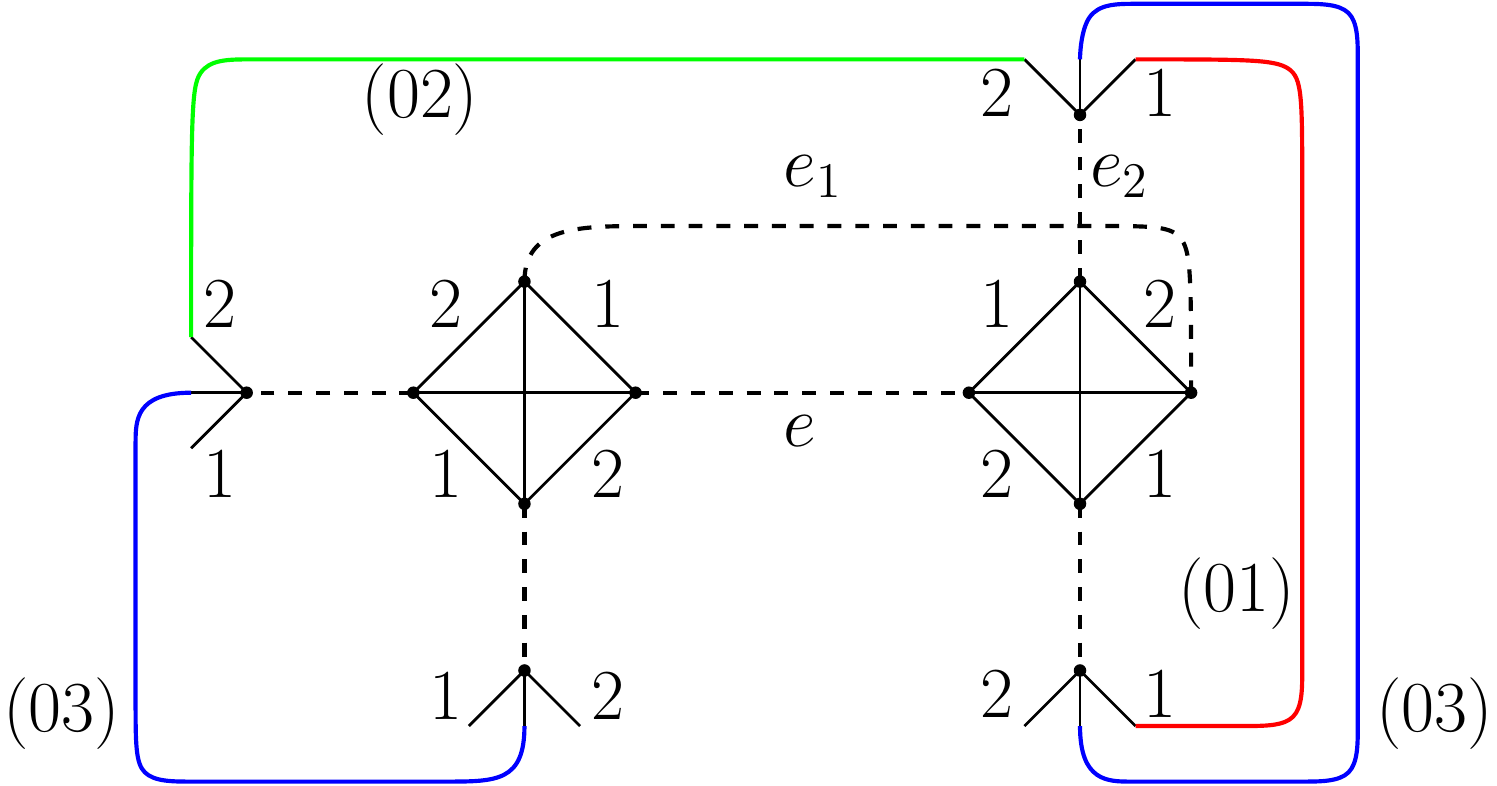} \end{array}
\end{equation}
We perform the flip as follows
\begin{equation}
\begin{array}{c} \includegraphics[scale=.4]{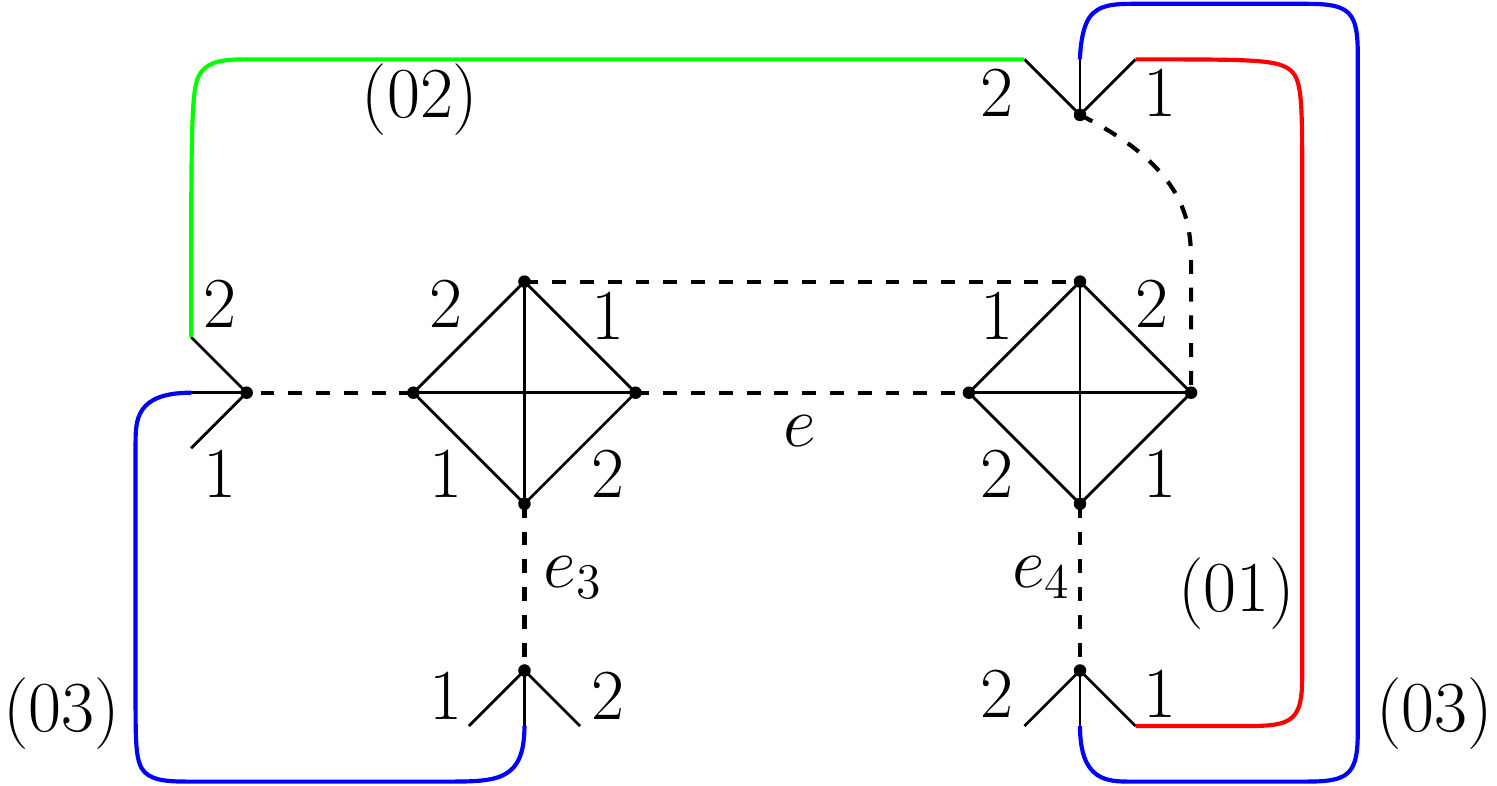} \end{array}
\end{equation}
for which $\Delta_{G\to G'} F=0$. It can now be observed that $e_3, e_4$ have two faces in common, i.e. $F_2^{\{e_3, e_4\}}(G') = F_3^{\{e_3, e_4\}}(G')=1$ and this time there is a flip which guarantees $\Delta_{G'\to G''} F>0$ since it creates a new face of color 2 and a new face of color 3
\begin{equation}
\begin{array}{c} \includegraphics[scale=.4]{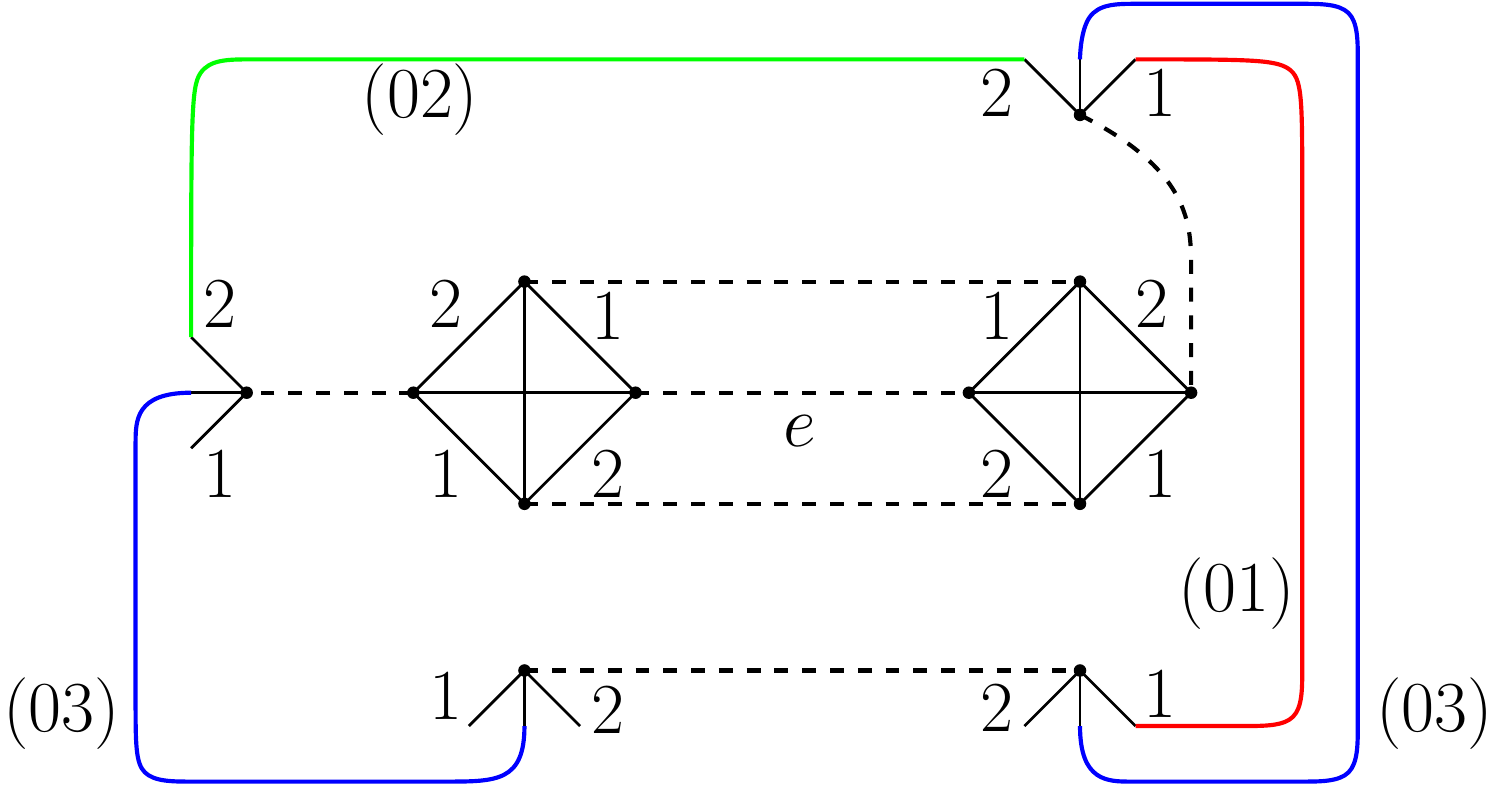} \end{array}
\end{equation}
The graph is still connected so we conclude that $G\not\in\cG_{\max}$.

\item[$G$ has no proper face of length 2 and no pair of bubbles connected by more than one edge of color 0] Consider any pair of bubbles. Their vertices are connected to vertices of other bubbles,
\begin{equation}
\begin{array}{c} \includegraphics[scale=.4]{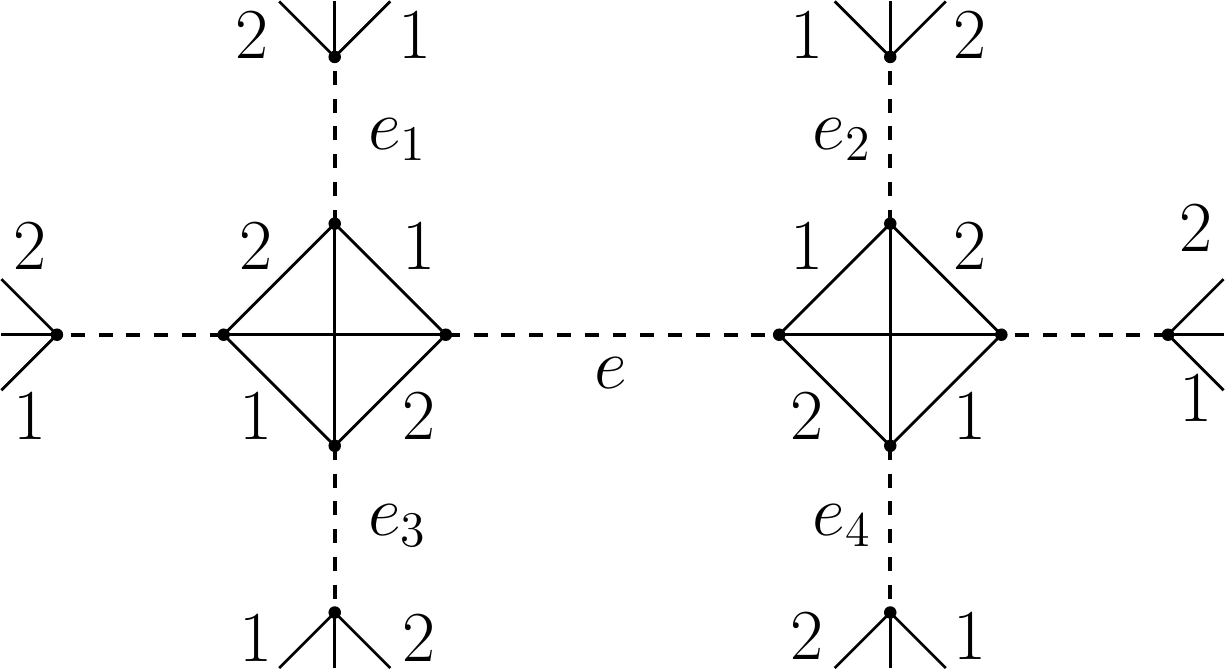} \end{array}
\end{equation}
Just like in the previous case, our strategy will be to flip $\{e_1,e_2\}$ then $\{e_3, e_4\}$. There is a flip of $\{e_1, e_2\}$ which creates a new face of color 1, i.e. $\Delta_{G\to G'} F_1=1$. As for the faces of color 2 through the flip, there are three cases,
\begin{itemize}
\item[$a)$] $\Delta_{G\to G'} F_2=1$, then given that $|\Delta_{G\to G'} F_3| \leq 1$, it comes $\Delta_{G\to G'} F>0$ and $G\not\in\cG_{\max}$.
\item[$b)$] $\Delta_{G\to G'} F_2=0$. If $\Delta_{G\to G'} F_3=1$ we can apply the case $a$ by exchanging the colors 2 and 3. If $\Delta_{G\to G'} F_3=0$ we still have $\Delta_{G\to G'} F>0$. Therefore only the case $(\Delta_{G\to G'} F_2=0, \Delta_{G\to G'} F_3=-1)$ remains to be investigated.
\item[$c)$] $\Delta_{G\to G'} F_2=-1$. Then when $\Delta_{G\to G'} F_3=0$, we can apply the case $b$ with the colors 2 and 3 exchanged. Hence only the case $(\Delta_{G\to G'} F_2=-1, \Delta_{G\to G'} F_3=-1)$ remains to be investigated.
\end{itemize}

\item[Case $b)$ $\Delta_{G\to G'} F_2=0, \Delta_{G\to G'} F_3=-1$] The paths contributing to the faces of colors 1, 2 and 3 are as follows
\begin{equation}
G = \begin{array}{c} \includegraphics[scale=.4]{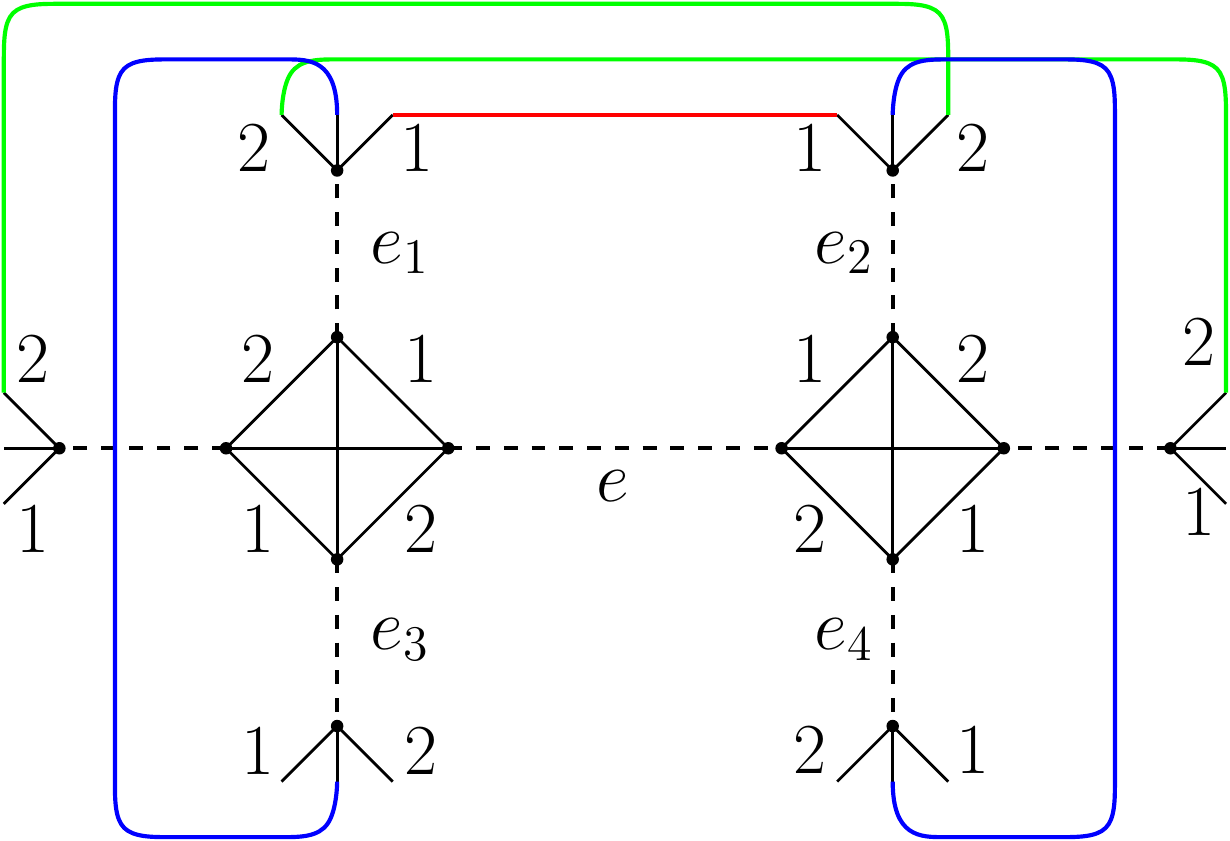} \end{array} \qquad \to \qquad
G' = \begin{array}{c} \includegraphics[scale=.4]{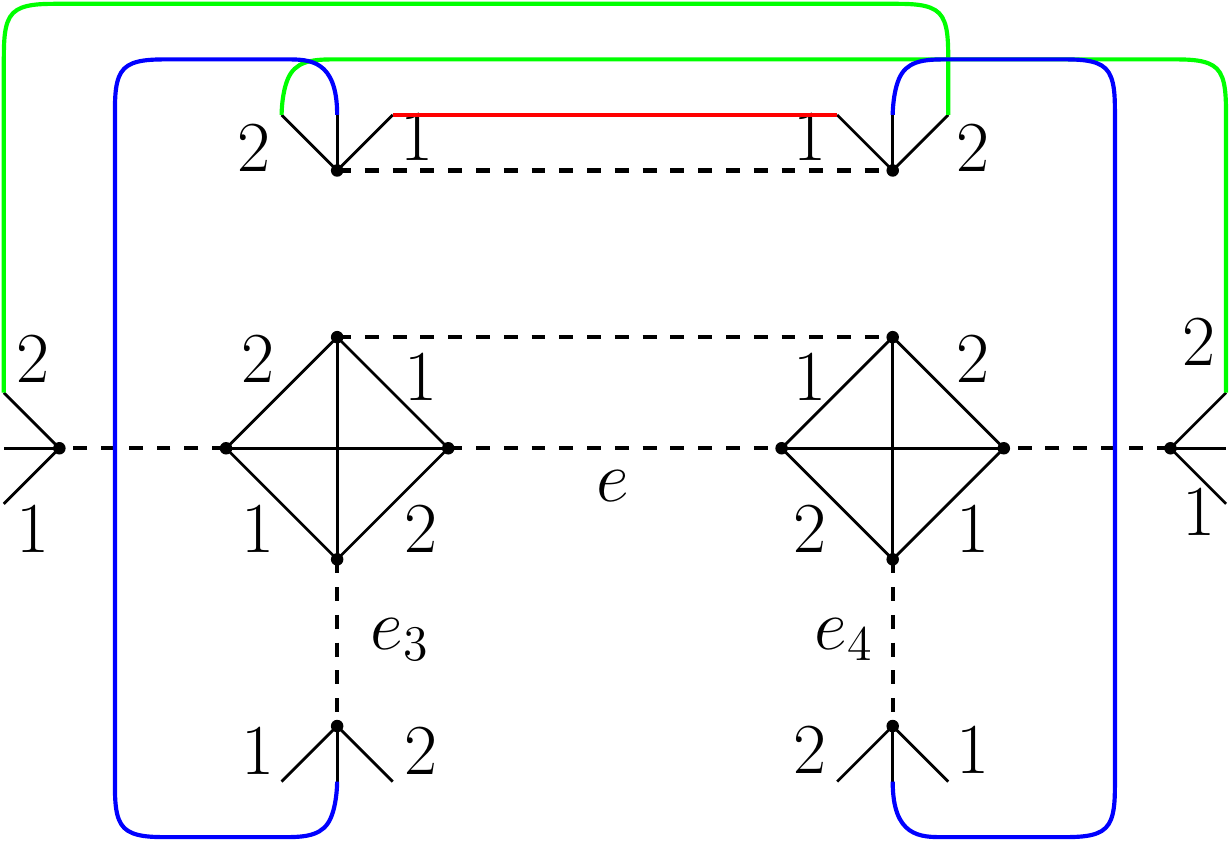} \end{array}
\end{equation}
At this stage we have $\Delta_{G\to G'} F=0$. We can then apply Lemma \ref{thm:Length2}, which is equivalent to flipping $\{e_3, e_4\}$ from $G'$ to $G''$, finally creating one more face of color 2 and one more face of color 3, 
\begin{equation}
G'' = \begin{array}{c} \includegraphics[scale=.4]{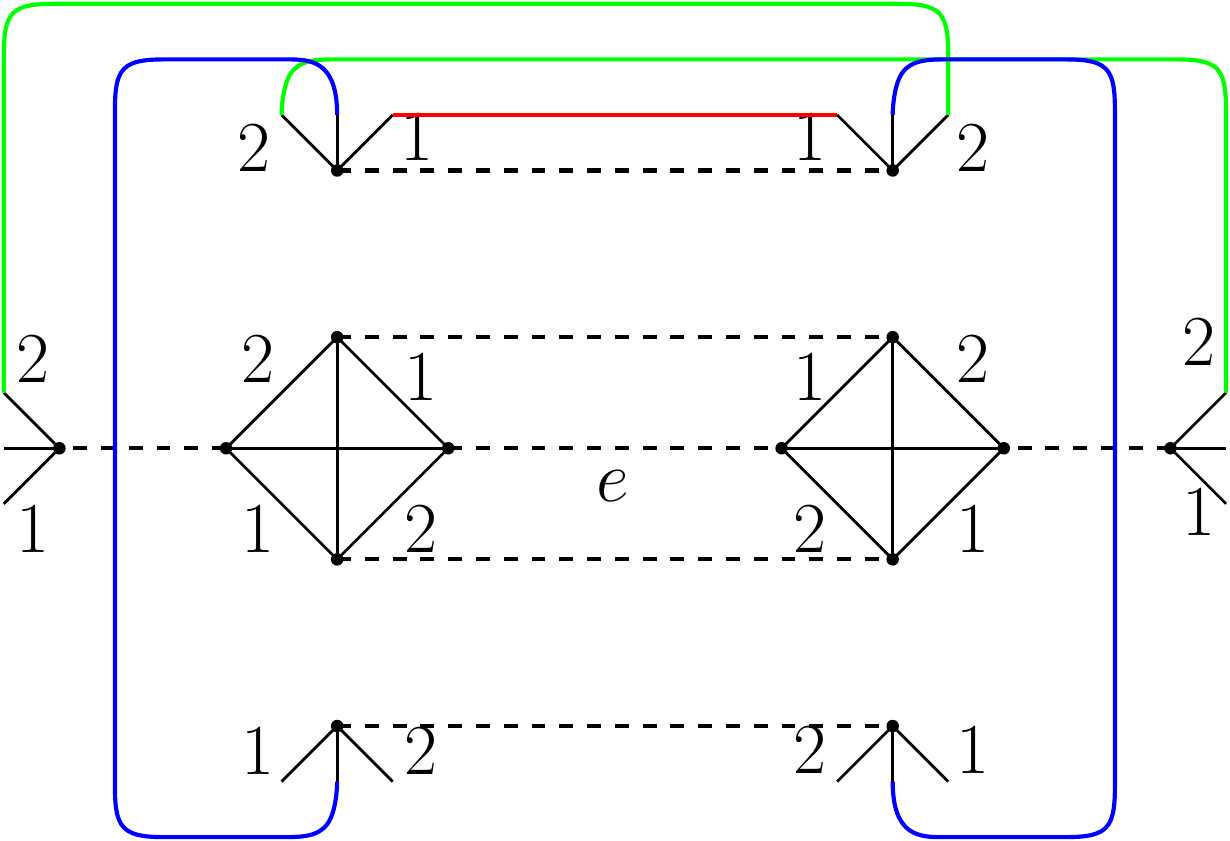} \end{array}
\end{equation}
thereby $\Delta_{G\to G''} F>0$ and $G\not\in\cG_{\max}$.

\item[Case $c)$ $\Delta_{G\to G'} F_2=-1, \Delta_{G\to G'} F_3=-1$] This is the most delicate case since $\Delta_{G\to G'} F=-1$ after the first flip meaning that a face has been lost,
\begin{equation}
G = \begin{array}{c} \includegraphics[scale=.4]{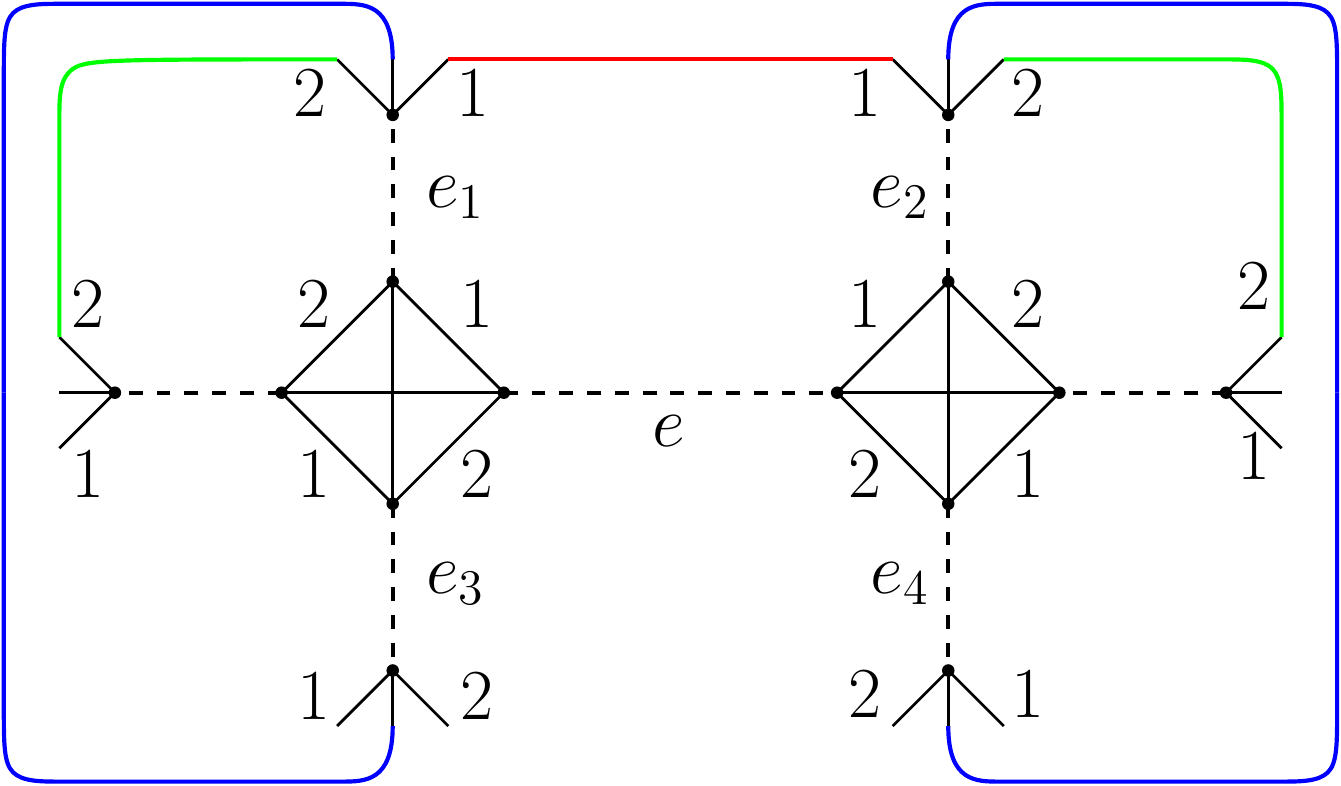} \end{array} \qquad \to \qquad
G' = \begin{array}{c} \includegraphics[scale=.4]{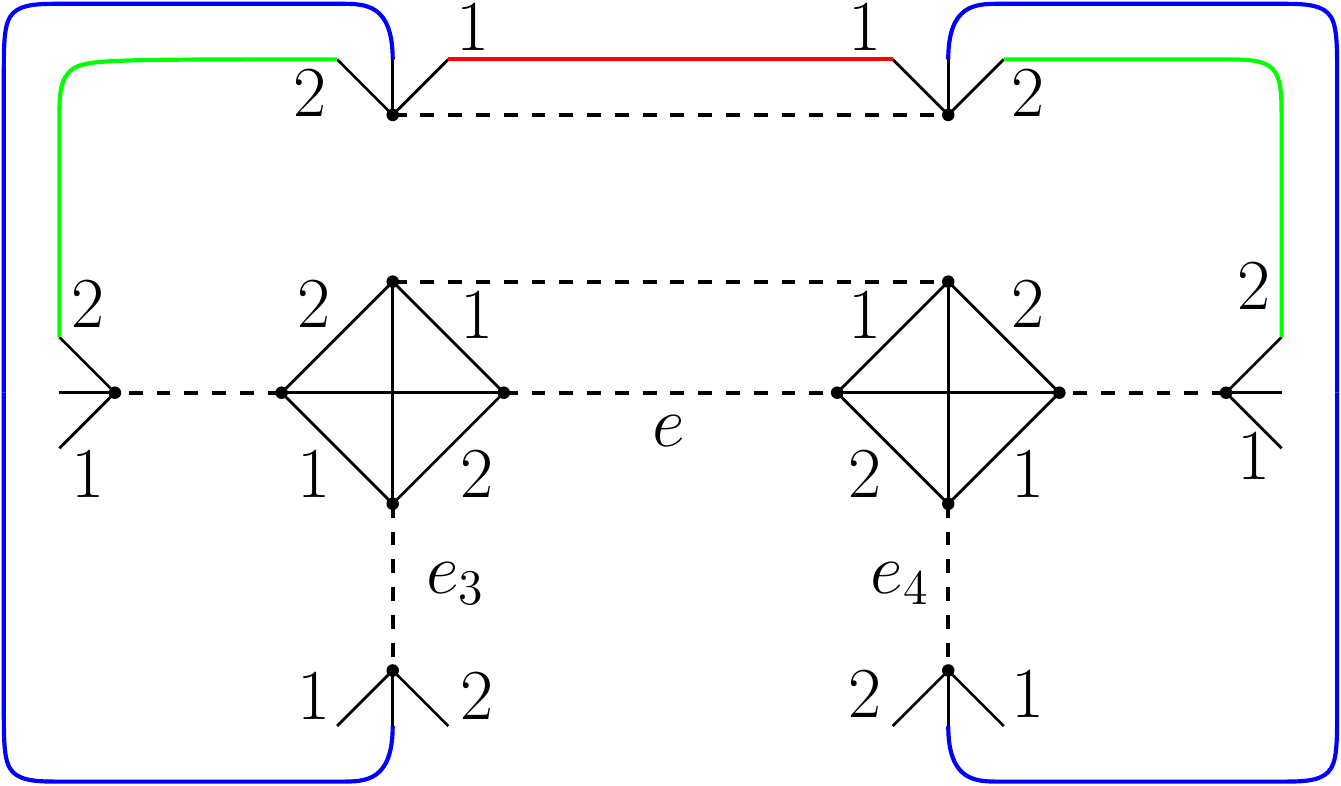} \end{array}
\end{equation}
The second flip on $\{e_3, e_4\}$ creates one more face of color 2 and one more face of color 3, so that $\Delta_{G\to G''}F\geq 0$. The only case where the total variation of the number of faces vanishes is when $F_1^{\{e_3, e_4\}}(G')=2$, i.e.
\begin{equation}
G' = \begin{array}{c} \includegraphics[scale=.4]{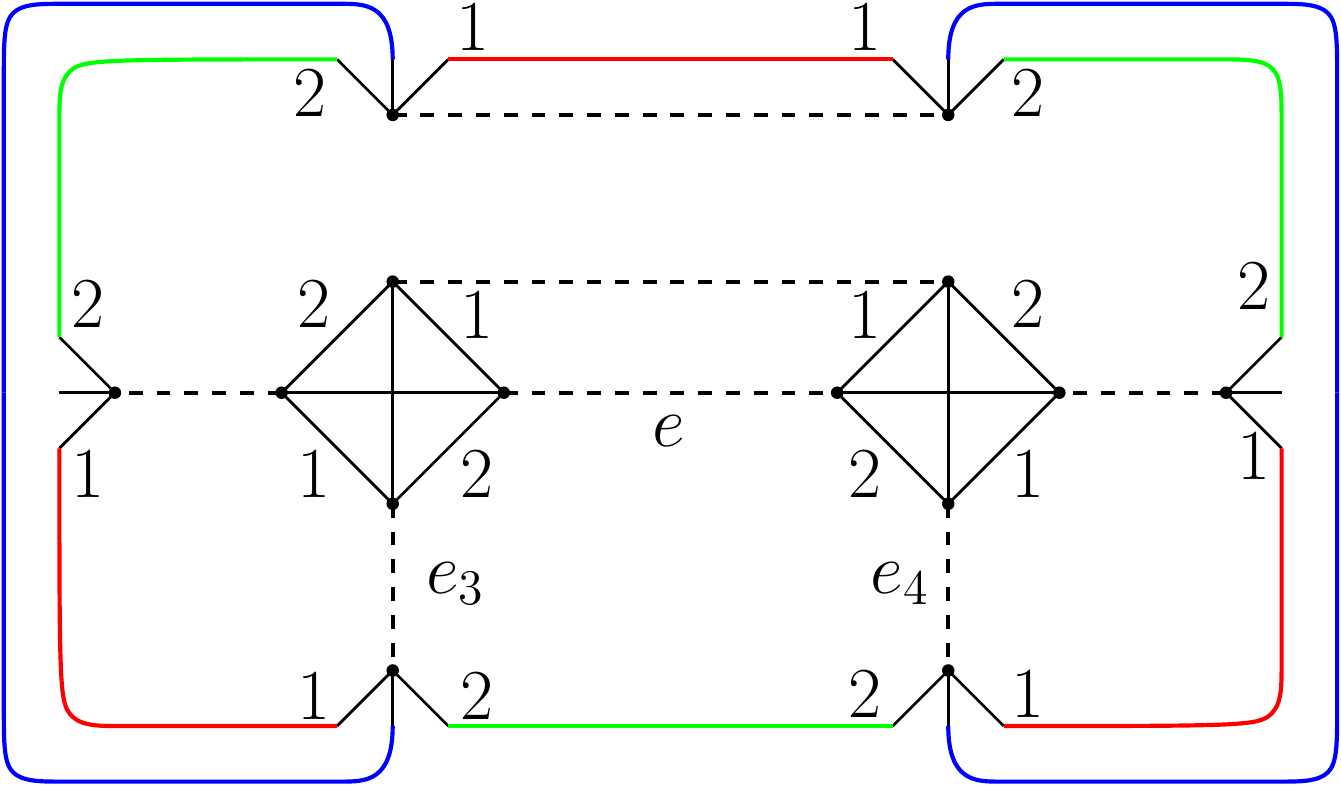} \end{array} \qquad \to \qquad
G'' = \begin{array}{c} \includegraphics[scale=.4]{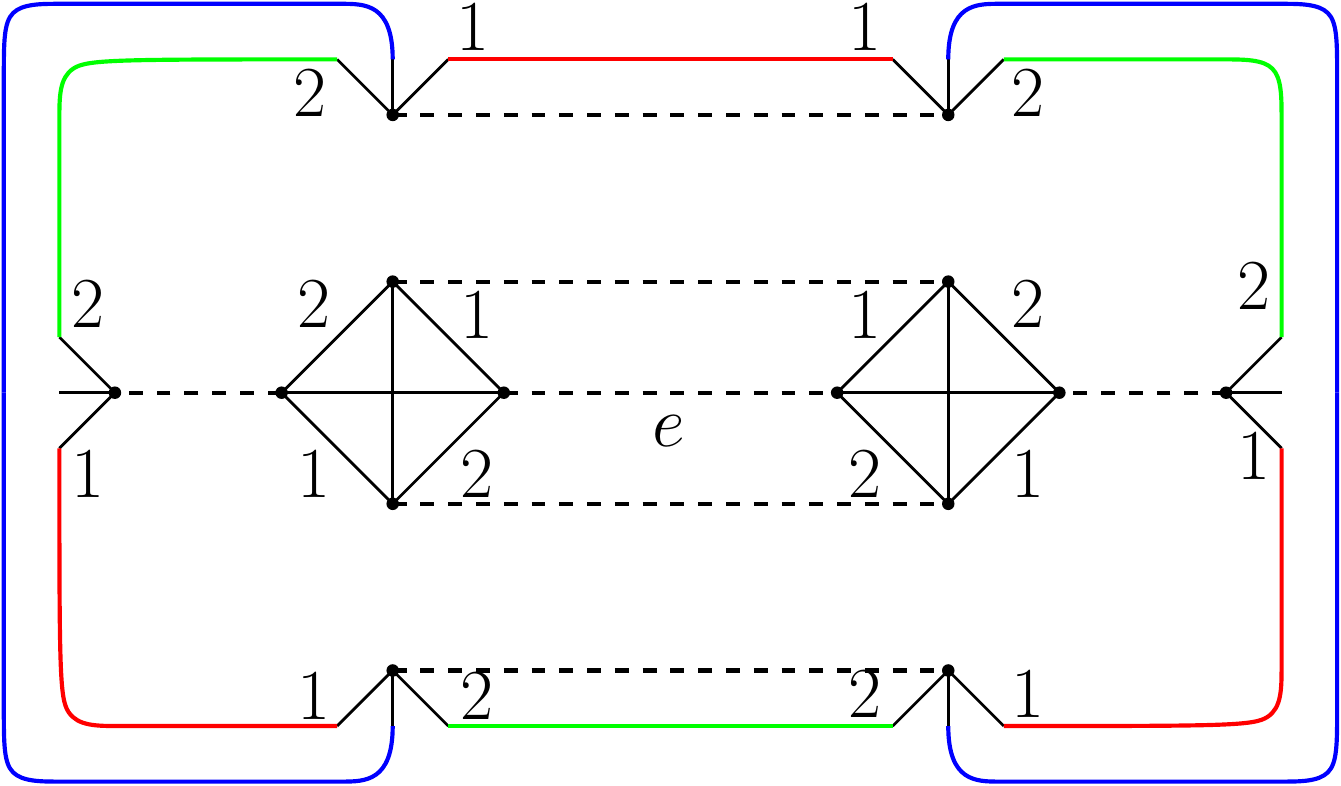} \end{array}
\end{equation}
One then performs a melonic reduction, i.e. removing the melonic dipole which has been created by the two flips,
\begin{equation}
G'' \qquad \to \qquad \tilde{G} = \begin{array}{c} \includegraphics[scale=.4]{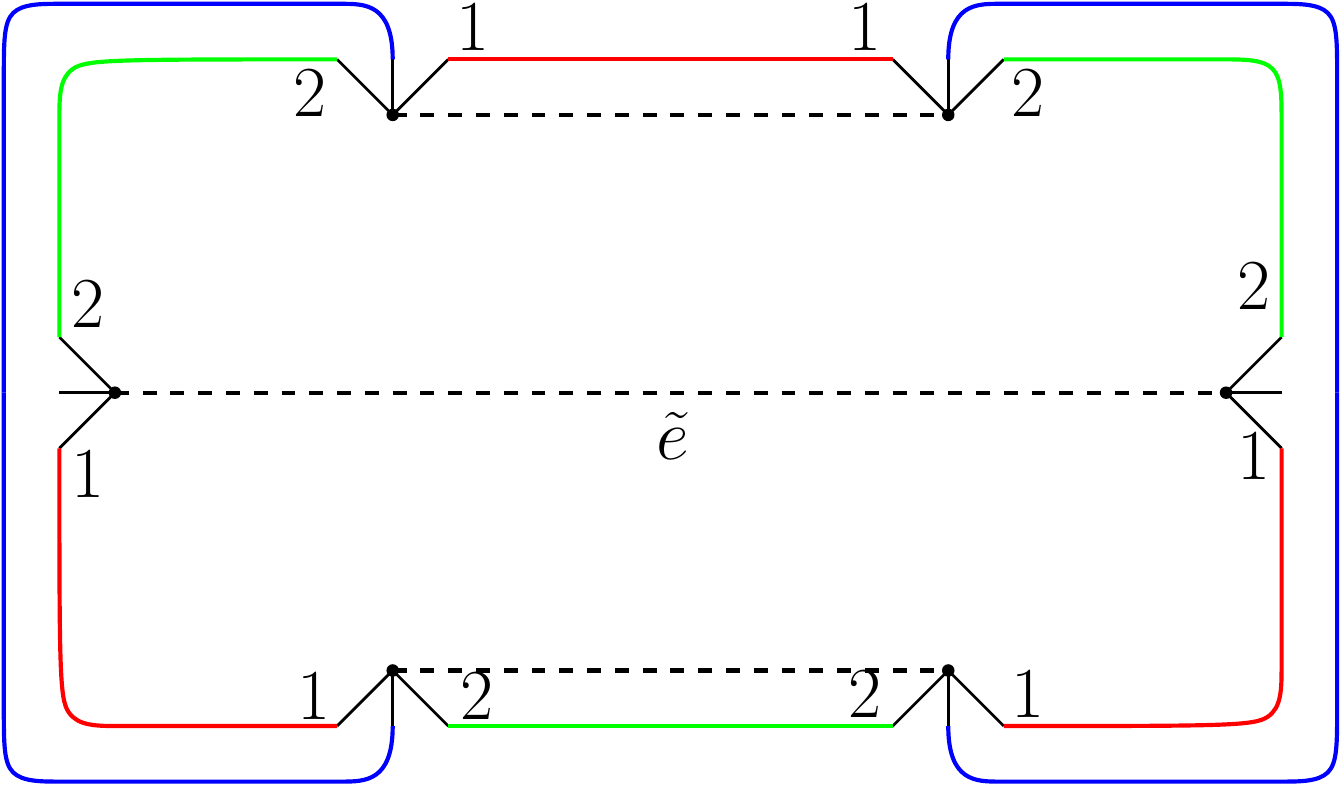} \end{array}
\end{equation}
$\tilde{G}$ has two bubbles less than $G$ and from our induction hypothesis it has to be melonic or a tadpole decorated with melons. We will show however that it cannot be, because it has no proper faces of length 2.

Let us re-analyze the sequence of transformations $G\to G' \to G'' \to \tilde{G}$ and pay attention to the creation of proper faces of length 2. We start with $G$ which has none. $G'$ only has one. Indeed, the faces whose lengths can be shortened through the flip are only the one face of color 1 which goes along $e_1$ and $e_2$ (the other faces along $e_1$ and $e_2$ are merged and thus lengthened, while the other faces of $G$ are unaffected). The flip might therefore create a second proper face of length 2 as follows,
\begin{equation}
\begin{array}{c} \includegraphics[scale=.4]{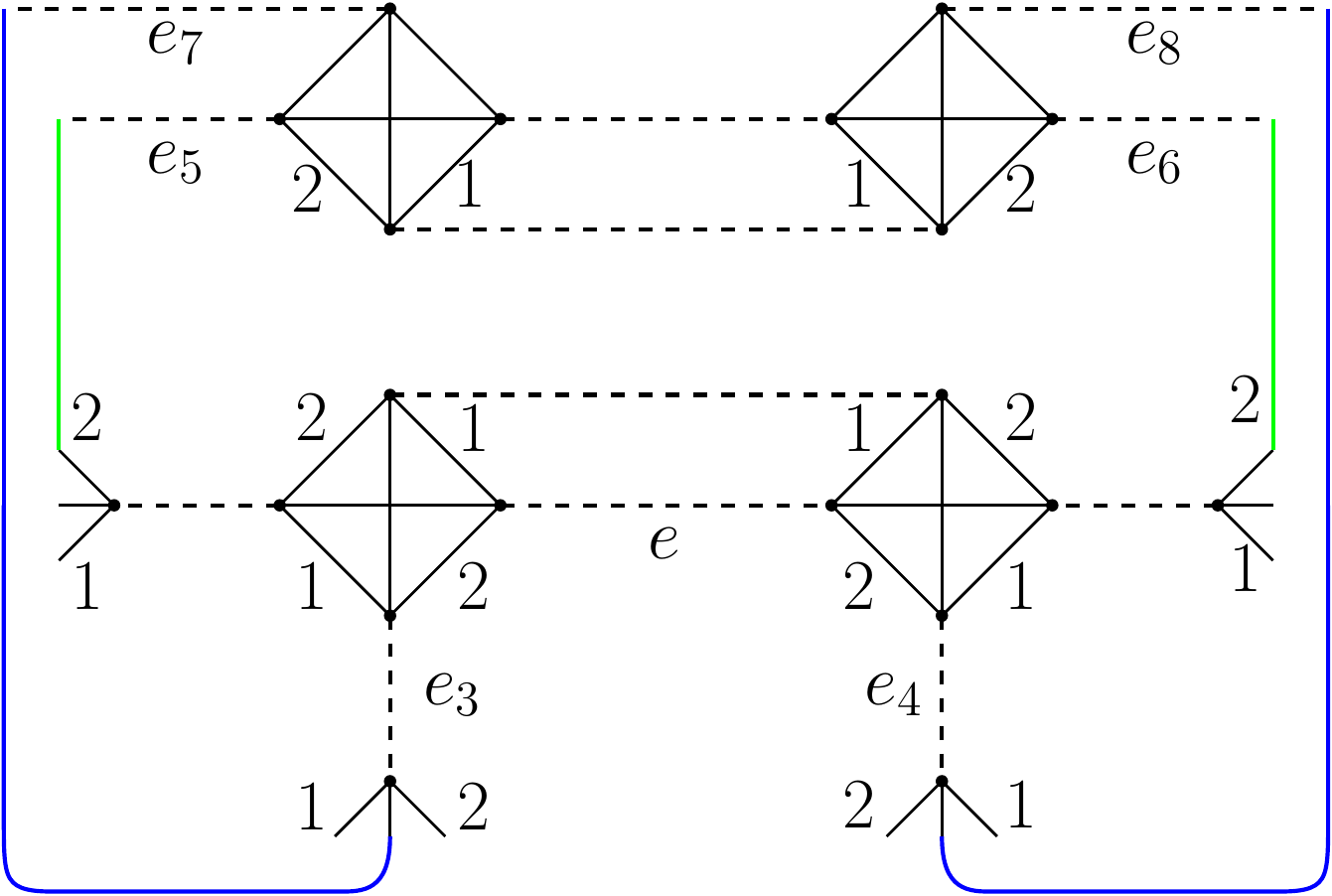} \end{array}
\end{equation}
From Lemma \ref{thm:Length2}, the pair of edges $\{e_5, e_6\}$ should form a 2-edge-cut which it is not (same for $\{e_7, e_8\}$).

The same argument applies to the transformation $G'\to G''$, from which we conclude that the only proper faces of length 2 of $G''$ are the three of its newly created melonic dipole. Finally, the reduction $G''\to \tilde{G}$ does not create proper faces of length 2. Indeed, assume that a proper face of length 2 and color 2 was created,
\begin{equation}
\tilde{G} = \begin{array}{c} \includegraphics[scale=.4]{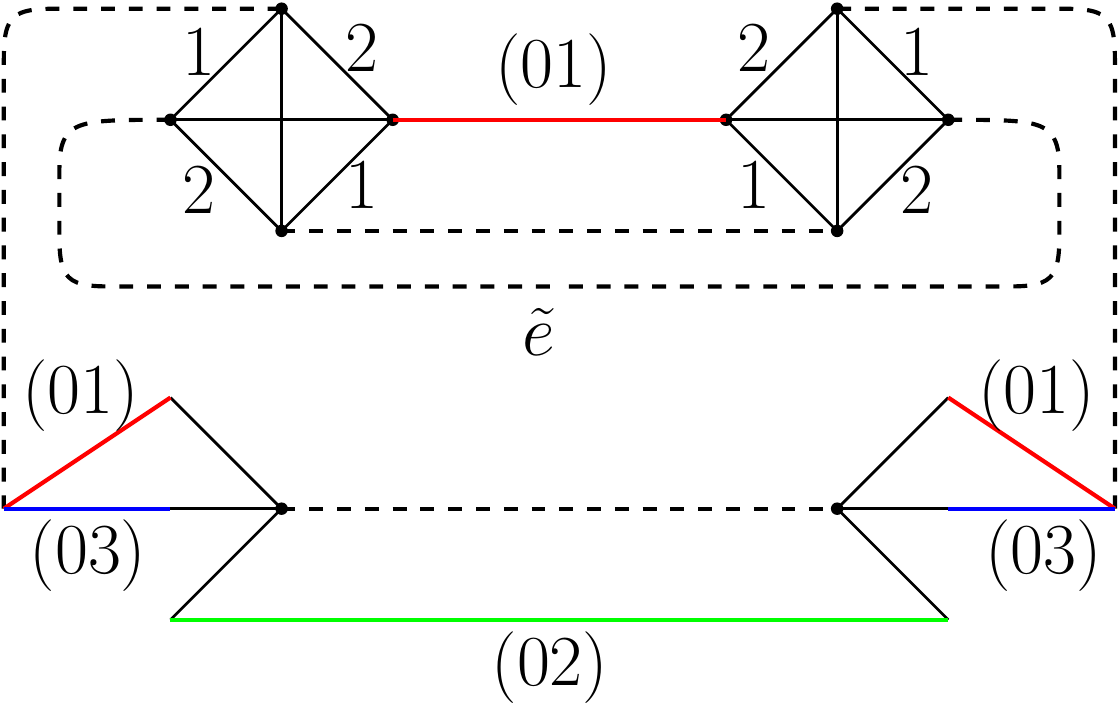} \end{array}
\end{equation}
By adding the melonic dipole back on $\tilde{e}$ and flipping the edges back, it appears that $G$ must have a proper face of length 2 which is a contradiction. Therefore $\tilde{G}\not\in\cG_{\max}$ and so is not $G$.
\end{description}
\end{proof}

\section*{Acknowledgements}

This research was supported by the ANR MetACOnc project ANR-15-CE40-0014.


\end{document}